\newcommand{\num}[1]{\relax\ifmmode \mathbb #1\else $\mathbb #1$\fi}
\newcommand{\naturals}{{\num N}^0}
\newcommand{\reals}{{\num R}}
\newcommand{\expect}{\mathbb{E}}
\newcommand{\conv}{\mathit{Conv}}
\newcommand{\controller}{\mathcal{C}}
\newcommand{\J}{{\mathcal{J}}}
\newcommand{\R}{{\mathcal{L}}}
\newcommand{\M}{{\mathcal{M}}}
\newcommand{\N}{{\mathcal{N}}}
\renewcommand{\P}{{\mathcal{P}}}
\newcommand{\Z}{{\mathcal{Z}}}
\newtheorem{theorem}{Theorem}
\newtheorem{lemma}{Lemma}
\newtheorem{proposition}{Proposition}
\newtheorem{corollary}{Corollary}
\newtheorem{definition}{Definition}
\title{\LARGE \bf
Distributed Traffic Signal Control for Maximum Network Throughput
}
\author{Tichakorn Wongpiromsarn, Tawit Uthaicharoenpong, Yu Wang, Emilio Frazzoli and Danwei Wang\vspace{-5mm}% <-this % stops a space
\thanks{T. Wongpiromsarn is with the Singapore-MIT Alliance for Research and Technology,
Singapore {\tt\footnotesize nok@smart.mit.edu}}
\thanks{T. Uthaicharoenpong, Y. Wang and D. Wang are with the Nanyang Technological University, Singapore
{\tt \footnotesize tawit@ntu.edu.sg, wangyu81@e.ntu.edu.sg, edwwang@ntu.edu.sg}}
\thanks{E.  Frazzoli is with the Massachusetts Institute of Technology, Cambridge, MA, USA
  {\tt\footnotesize frazzoli@mit.edu}}%
}
\begin{document}
\maketitle
\thispagestyle{empty}
\pagestyle{empty}

\begin{abstract}
We propose a distributed algorithm for controlling traffic signals.
Our algorithm is adapted from backpressure routing, which has been mainly applied to communication
and power networks.
We formally prove that our algorithm ensures global optimality as it leads to maximum network throughput even though
the controller is constructed and implemented in a completely distributed manner.
Simulation results show that our algorithm significantly outperforms SCATS, 
an adaptive traffic signal control system that is being used in many cities.
\end{abstract}

\section{Introduction}
Traffic signal control is a key element in traffic management that affects the efficiency of urban transportation.
Many major cities worldwide currently employ adaptive traffic signal control systems where the light timing is adjusted based
on the current traffic situation.
Examples of widely-used adaptive traffic signal control systems include SCATS (Sydney Coordinated Adaptive Traffic System)~\cite{Lowrie82SCATS,Keong93Glide,Liu03Master} and 
SCOOT (Split Cycle Offset Optimisation Technique)~\cite{Day1998,Stevanovic2008}.

Control variables in traffic signal control systems typically include phase, cycle length, split plan and offset.
A phase specifies a combination of one or more traffic movements simultaneously receiving the right of way during a signal interval.
Cycle length is the time required for one complete cycle of signal intervals.
A split plan defines the percentage of the cycle length allocated to each of the phases during a signal cycle.
Offset is used in coordinated traffic control systems to reduce frequent stops at a sequence of junctions.

SCATS, for example, attempts to equalize the degree of saturation (DS), 
i.e., the ratio of effectively used green time to the total green time,
for all the approaches.
The computation of cycle length and split plan is only carried out at the critical junctions.
Cycle length and split plan at non-critical junctions are controlled by the critical junctions via offsets.
The algorithm involves many parameters, which need to be properly calibrated for each critical junction.
%For these junctions, SCATS employs a heuristic approach to compute cycle length, with various parameters to be tuned, to achieve this objective.
In addition, all the possible split plans need to be pre-specified and
a voting scheme is used in order to select a split plan that leads to approximately equal DS for all the approaches.

Systems and control theory has been recently applied to traffic signal control problems.
In \cite{Diakaki02}, a multivariable regulator is proposed based on linear-quadratic regulator methodology and
the store-and-forward modeling approach \cite{Aboudolas08}.
Robust control theory has been applied to traffic signalization in \cite{Yu97thesis}.
Approaches based on Petri Net modeling language are considered in, e.g., \cite{Mladenovic11thesis,Soares08}.
Optimization-based techniques are considered, e.g., in \cite{Dujardin11,Shen11}.
However, one of the major drawbacks of these approaches is the scalability issue, which limits
their application to relatively small networks.

To address the scalability issue, in \cite{Cheng09}, a distributed algorithm is presented where 
the signal at each junction is locally controlled independently from other junctions.
However, global optimality is no longer guaranteed, although simulation results show that it reduces
the total delay compared to the fixed-time approach.
Another distributed approach is considered in \cite{Lammer08} 
where the constraint that each traffic flow is served once, on average, within a desired service interval $T$ is imposed.
It can be proved that their distributed algorithm
stabilizes the network whenever there exists a stable fixed-time control with cycle time $T$.
However, the knowledge of traffic arrival rates is required.
In addition, multi-phase operation is not considered.

An objective of this work is to develop a traffic signal control strategy that requires minimal tuning
and scales well with the size of the road network while ensuring satisfactory performance.
Our algorithm is motivated by backpressure routing introduced in \cite{Tassiulas92Stability},
which has been mainly applied to communication and power networks 
where a packet may arrive at any node in the network and can only leave the system when it reaches its destination node.
One of the attractive features of backpressure routing is that it leads to maximum network throughput 
without requiring any knowledge about traffic arrival rates
\cite{Tassiulas92Stability,NMR05,Georgiadis06}.

To the authors' knowledge, this is the first time backpressure routing has been adapted
to solve the traffic signal control problem.
Since many assumptions made in backpressure routing are not valid in our traffic signalization application,
certain modifications need to be made to the original algorithm.
With these modifications, we formally prove that our algorithm inherits the desired properties of backpressure routing
as it leads to maximum network throughput even though the signal at each junction is determined completely independently
from the signal at other junctions, and no information about traffic arrival rates is provided.
Furthermore, since our controller is constructed and implemented in a completely distributed manner, 
it can be applied to an arbitrarily large network.
Simulation results show that our algorithm significantly outperforms SCATS.

The remainder of the paper is organized as follows: 
We provide useful definitions and existing results concerning network stability in the following section. 
Section \ref{sec:prob} describes the traffic signal control problem considered in this paper.
Our backpressure-based traffic signal control algorithm is described in Section \ref{sec:contr}.
In Section \ref{sec:contr_evaluation}, we formally prove that our algorithm
ensures global optimality as it leads to maximum network throughput, even though the signal at each junction
is determined completely independently from other junctions.
Section \ref{sec:results} presents simulation results, showing that our algorithm can significantly reduce
the queue length compared to SCATS.
Finally, Section \ref{sec:conclusions} concludes the paper and discusses future work.

%%%%%%%%%%%%%%%%%%%%%%%%%%%%%%%%%%%%%%%%%%%%%%%%%%%%%%%%%%%%%%%
\section{Preliminaries}
In this section, we summarize existing results and definitions concerning network stabilility.
We refer the reader to \cite{Tassiulas92Stability,NMR05,Georgiadis06} for more details.

Consider a network modeled by a directed graph with $N$ nodes and $L$ links.
Each node maintains an internal queue of objects to be processed by the network, while
each link $(a,b)$ represents a channel for direct transmission of objects from node $a$ to node $b$.
Suppose the network operates in slotted time $t \in \naturals$ where $\naturals$ is the set of natural numbers (including zero).
Objects may arrive at any node in the network and can only leave the system upon reaching the their destination node.
Let $A_i(t)$ represent the number of objects that exogenously arrives at source node $i$ during slot $t$ and
$U_i(t)$ represent the queue length at node $i$ at time $t$.
We assume that all the queues have infinite capacity.
In addition, only the objects currently at each node at the beginning of slot $t$ can be transmitted during that slot.
Our control objective is to ensure that all queues are stable as defined below. %so that the queue length is uniformly bounded.

\begin{definition}
A network is \emph{strongly stable} if each individual queue $U$ satisfies
\begin{equation}
\limsup_{t \to \infty} \frac{1}{t} \sum_{\tau=0}^{t-1} 1_{[U(\tau) > V]} \to 0 \hbox{ as } V \to \infty,
\end{equation}
where for any event $X$, the indicator function $1_X$ takes the value 1 if X is satisfied 
and takes the value 0 otherwise.
\end{definition}

%\begin{definition}
%A network is \emph{strongly stable} if each individual queue $U$ of the network satisfies
%\begin{equation*}
%\limsup_{t \to \infty} \frac{1}{t} \sum_{\tau=0}^{t-1} \expect\{U(\tau)\} < \infty.
%\end{equation*}
%\end{definition}

In this paper, we restrict our attention to strong stability and use the term ``stability'' to refer to strong stability defined above.
%For a network with $K$ queues $U_1, \ldots, U_K$, we define
%\begin{eqnarray}
%g_k(V) &=& \limsum_{t \to \infty} \frac{1}{t} \sum_{\tau=0}^{t-1} 1_{[U_k(\tau) > V]} \hbox{ for each } k \in \{1, \ldots, K\},\\
%g_{sum}(V) &=& \limsum_{t \to \infty} \frac{1}{t} \sum_{\tau=0}^{t-1} 1_{[U_1(\tau) + \ldots + U_K(\tau) > V]}.
%\end{eqnarray}
%
%The following necessary condition for network stability has been proved in, e.g., \cite{}.
%\begin{proposition}
%If a network is stable, then the probability that the unfinished work in all queues 
%simultaneously drops below $V$ is greater than $1/2$ infinitely often.
%\end{proposition}
%
For a network with $N$ queues $U_1, \ldots, U_N$ that evolve according to some probabilistic law, 
a sufficient condition for stability can be provided using Lyapunov drift.
%the following Lyapunov-based argument provides a sufficient condition for stability

\begin{proposition}
\label{prop:LyapunovStability}
Suppose $\expect\{U_i(0)\} < \infty$ for all $i \in \{1, \ldots, N\}$
and there exist constants $B > 0$ and $\epsilon > 0$ such that
\begin{equation}
\expect \Big\{L(\mathbf{U}(t+1)) - L(\mathbf{U}(t)) \Big| \mathbf{U}(t)\Big\} \leq B - \epsilon \sum_{i=1}^N U_i(t), \forall t \in \naturals,
\end{equation}
where for any queue vector $\mathbf{U} = [U_1, \ldots, U_N]$, $L(\mathbf{U}) \triangleq \sum_{i=1}^N U_i^2$.
Then the network is strongly stable.
\end{proposition}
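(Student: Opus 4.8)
The plan is to establish the stronger (and standard) conclusion that the time-average expected total queue length is bounded by $B/\epsilon$, and then translate this into the indicator-based definition of strong stability via a Markov-type estimate. First I would take the expectation of the hypothesized drift bound over the distribution of $\mathbf{U}(t)$, using the law of iterated expectation, to obtain the unconditional inequality
\begin{equation}
\expect\{L(\mathbf{U}(t+1))\} - \expect\{L(\mathbf{U}(t))\} \leq B - \epsilon \sum_{i=1}^N \expect\{U_i(t)\},
\end{equation}
valid for every $t \in \naturals$.

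The next step is to sum this telescoping inequality over $\tau = 0, \ldots, t-1$. The left-hand side collapses to $\expect\{L(\mathbf{U}(t))\} - \expect\{L(\mathbf{U}(0))\}$, and since $L \geq 0$ by definition we may discard the nonnegative term $\expect\{L(\mathbf{U}(t))\}$. Rearranging and dividing by $\epsilon t$ then yields
\begin{equation}
\frac{1}{t} \sum_{\tau=0}^{t-1} \sum_{i=1}^N \expect\{U_i(\tau)\} \leq \frac{B}{\epsilon} + \frac{\expect\{L(\mathbf{U}(0))\}}{\epsilon t}.
\end{equation}
Here I am implicitly using that the initial Lyapunov value is finite in expectation; this is the one point at which the stated hypothesis $\expect\{U_i(0)\} < \infty$ must really be read as $\expect\{L(\mathbf{U}(0))\} < \infty$, which is automatic for deterministic or bounded initial queues. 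Taking $\limsup_{t\to\infty}$ annihilates the $1/t$ term and leaves the uniform bound $\limsup_{t\to\infty} \frac{1}{t}\sum_{\tau=0}^{t-1}\sum_{i}\expect\{U_i(\tau)\} \leq B/\epsilon$; since every summand is nonnegative, the same bound holds for each individual queue $U_i$.

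Finally, to reach the definition of strong stability I would dominate the indicator by the queue value itself: because $U \geq 0$, we have the pointwise inequality $1_{[U(\tau) > V]} \leq U(\tau)/V$, whose expectation is precisely Markov's inequality $\Pr[U(\tau) > V] \leq \expect\{U(\tau)\}/V$. Averaging over $\tau$ and invoking the time-average bound just derived gives
\begin{equation}
\limsup_{t\to\infty} \frac{1}{t}\sum_{\tau=0}^{t-1} \Pr[U(\tau) > V] \leq \frac{1}{V}\,\frac{B}{\epsilon},
\end{equation}
which tends to $0$ as $V \to \infty$, establishing stability for each queue. The telescoping and the Markov estimate are routine; I expect the main subtlety to lie not in the algebra but in matching the \emph{mean-sense} statement above to the sample-path reading of the limsup in the strong-stability definition, a gap bridged if necessary by the nonnegativity of the averaged indicators together with Fatou's lemma.
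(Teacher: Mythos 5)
The paper itself contains no proof of this proposition: it is quoted as a known result in the Preliminaries and deferred to the cited backpressure literature (Tassiulas--Ephremides, Neely--Modiano--Rohrs, Georgiadis--Neely--Tassiulas). Your argument reconstructs exactly the standard proof from those references: iterate expectations over $\mathbf{U}(t)$, telescope the drift inequality, drop the nonnegative terminal Lyapunov term to obtain $\limsup_{t\to\infty}\frac{1}{t}\sum_{\tau=0}^{t-1}\sum_{i}\expect\{U_i(\tau)\}\leq B/\epsilon$, and convert to the overflow criterion via Markov's inequality. Your side remark about the hypothesis is also on target: to subtract $\expect\{L(\mathbf{U}(t))\}$ legitimately one needs it finite for every $t$, which follows by induction from the drift bound only once $\expect\{L(\mathbf{U}(0))\}<\infty$ --- a second-moment condition strictly stronger than the stated $\expect\{U_i(0)\}<\infty$, and precisely the condition assumed in the cited sources.

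One caution about your closing sentence. If the limsup in the paper's definition of strong stability is read pathwise (the indicator $1_{[U(\tau)>V]}$ is a random variable and no expectation appears in the displayed condition), then your bound on $\limsup_{t}\frac{1}{t}\sum_{\tau=0}^{t-1}\Pr[U(\tau)>V]$ does \emph{not} deliver it, and Fatou cannot bridge the gap: for the bounded averages $X_t=\frac{1}{t}\sum_{\tau=0}^{t-1}1_{[U(\tau)>V]}$ the applicable (reverse) Fatou inequality is $\limsup_t\expect\{X_t\}\leq\expect\{\limsup_t X_t\}$, which runs in the wrong direction --- a small limsup of means places no bound on the almost-sure limsup. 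The resolution is definitional rather than analytic: in the references this paper leans on, queue stability is defined through the time-averaged overflow probabilities $\Pr[U(\tau)>V]$, i.e., the expectation of the indicator, and under that reading your proof is already complete after the Markov step. You should delete the Fatou remark and instead state explicitly that the definition is interpreted in the mean; as written, that sentence gestures at a step that would fail.
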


\begin{definition}
\label{def:rate}
An arrival process $A(t)$ is \emph{admissible with rate} $\lambda$ if:
\begin{itemize}
\item The time average expected arrival rate satisfies
\begin{equation*}
\lim_{t \to \infty} \frac{1}{t} \sum_{\tau=0}^{t-1} \expect\{A(\tau)\} = \lambda.
\end{equation*}
\item There exists a finite value $A_{max}$ such that $\expect\{A(t)^2 \hspace{1mm}|\hspace{1mm} \mathbf{H}(t)\} \leq A_{max}^2$
for any time slot $t$, where $\mathbf{H}(t)$ represents the history up to time $t$, i.e., all events that take place during slots $\tau \in \{0,\ldots,t-1\}$.
\item For any $\delta > 0$, there exists an interval size $T$ (which may depend on $\delta$) such that for any initial time $t_0$,
\begin{equation*}
\expect\left\{ \frac{1}{T}  \sum_{k=0}^{T-1} A(t_0 + k) \hspace{1mm}\Big|\hspace{1mm} \mathbf{H}(t_0) \right\} \leq \lambda + \delta.
\end{equation*}
\end{itemize}
\end{definition}

For each node $i$, we define $\lambda_i$ to be the time average rate with which 
$A_i(t)$ is admissible.
%Assume that for each commodity $c$ and node $i$, $A_i^{(c)}(t)$ is admissible with time average rate $\lambda_i^{(c)}$.
Let $\boldsymbol{\lambda} = \left[\lambda_i\right]$ represent the arrival rate vector.

%\begin{definition}
%The \emph{network layer capacity region} $\Lambda$ is the closure of the set of all arrival rate matrices $\boldsymbol{\lambda}$	
%that can be stably supported by the network, considering all possible strategies for choosing the control variables to affect routing, scheduling, and resource allocation (including strategies that have perfect knowledge of future events).
%\end{definition}

\begin{definition}
\label{def:capacity region}
The \emph{capacity region} $\Lambda$ is the closed region of arrival rate vectors $\boldsymbol{\lambda}$
with the following properties:
\begin{itemize}
\item $\boldsymbol{\lambda} \in \Lambda$ is a necessary condition for network stability, considering
all possible strategies for choosing the control variables %to affect routing, scheduling, and resource allocation 
(including strategies that have perfect knowledge of future events).
\item $\boldsymbol{\lambda} \in \mathrm{int}(\Lambda)$ %strictly interior to $\Lambda$ 
is a sufficient condition for the network to be stabilized by a
policy that does not have a-priori knowledge of future events.
\end{itemize}
\end{definition}

The capacity region essentially describes the set of all arrival rate vectors that can be stably supported by the network.
A scheduling algorithm is said to maximize the network throughput if it stabilizes 
the network for all arrival rates in the interior of $\Lambda$.

%%%%%%%%%%%%%%%%%%%%%%%%%%%%%%%%%%%%%%%%%%%%%%%%%%%%%%%%%%%%%%%
\section{The Traffic Signal Control Problem}
\label{sec:prob}
A road network $\N$ is defined as a collection of links and signalized junctions.
%A road is connected to another via a junction.
Let $N$ and $L$ be the number of links and junctions, respectively, in $\N$.
Then, $\N$ can be written as 
$\N = (\R, \J)$ where $\R = \{\R_1, \ldots, \R_{N}\}$
and $\J = \{ \J_1, \ldots, \J_{L}\}$ are sets of all the links and signalized junctions, respectively, in $\N$.
Each junction $\J_i$ can be described by a tuple $\J_i = (\M_i, \P_i, \Z_i)$ where 
$\M_i \subseteq \R^2$ is a set of all the possible traffic movements through $\J_i$, 
$\P_i \subseteq 2^{\M_i}$ is a set of all the possible phases of $\J_i$ and
$\Z_i$ is a finite set of traffic states, each of which captures factors
that affect the traffic flow rate through $\J_i$ such as traffic and weather conditions.
Each traffic movement through junction $\J_i$ is defined by a pair $(\R_a, \R_b)$ where $\R_a, \R_b \in \R$ 
such that a vehicle may enter and exit $\J_i$ through $\R_a$ and $\R_b$, respectively.
Each phase $p \in \P_i$ defines a combination $p \subseteq \M_i$ of traffic movements simultaneously receiving the right-of-way.
A typical set of phases of a 4-way junction is shown in Figure \ref{fig:phase-ex}.

\begin{figure}[h] 
   \centering 
   \subfigure[]
        {\includegraphics[trim=5.5cm 10cm 13cm 4cm, clip=true, width=0.2\textwidth]{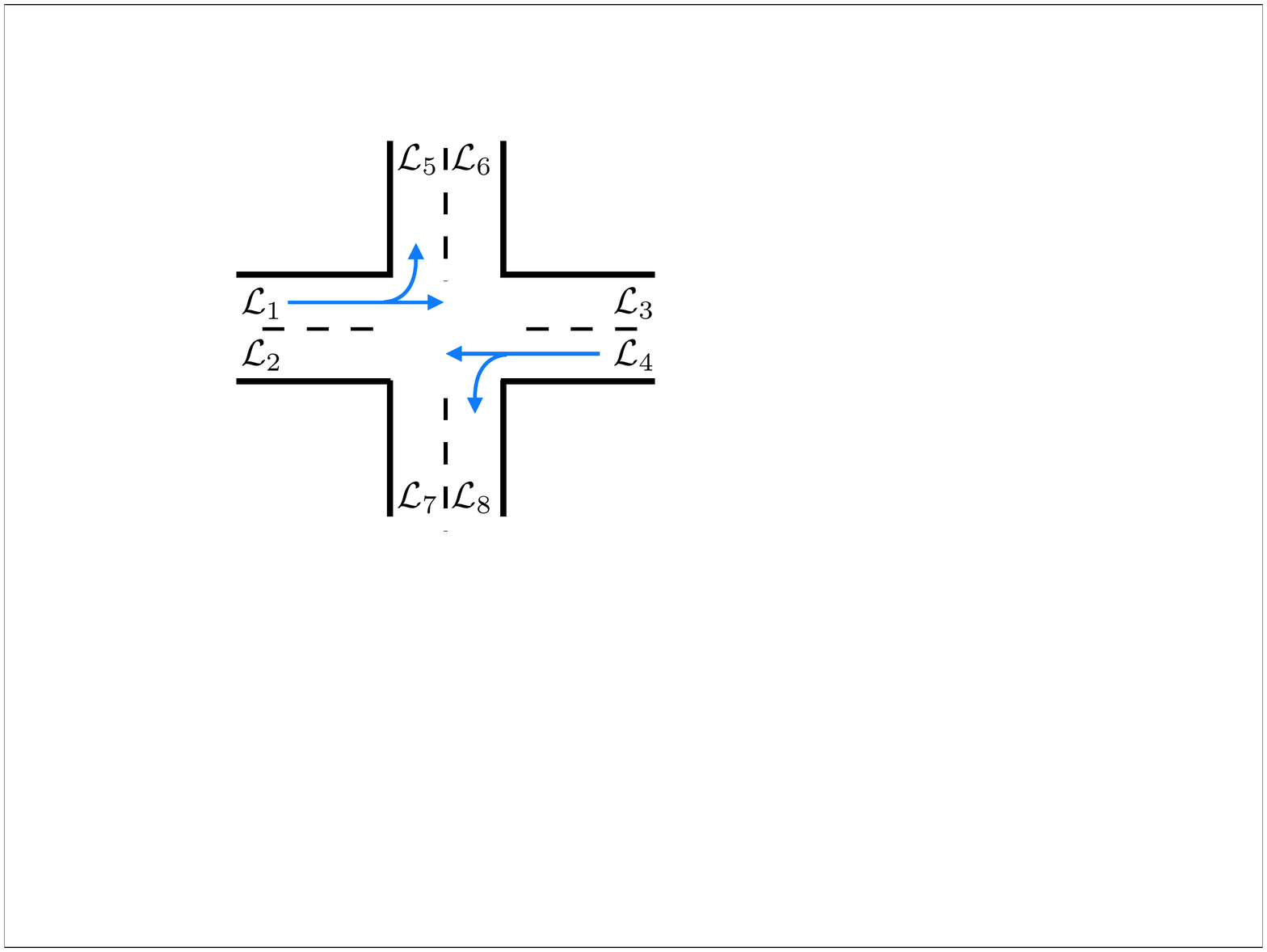}}
   \subfigure[]
        {\includegraphics[trim=5.5cm 10cm 13cm 4cm, clip=true, width=0.2\textwidth]{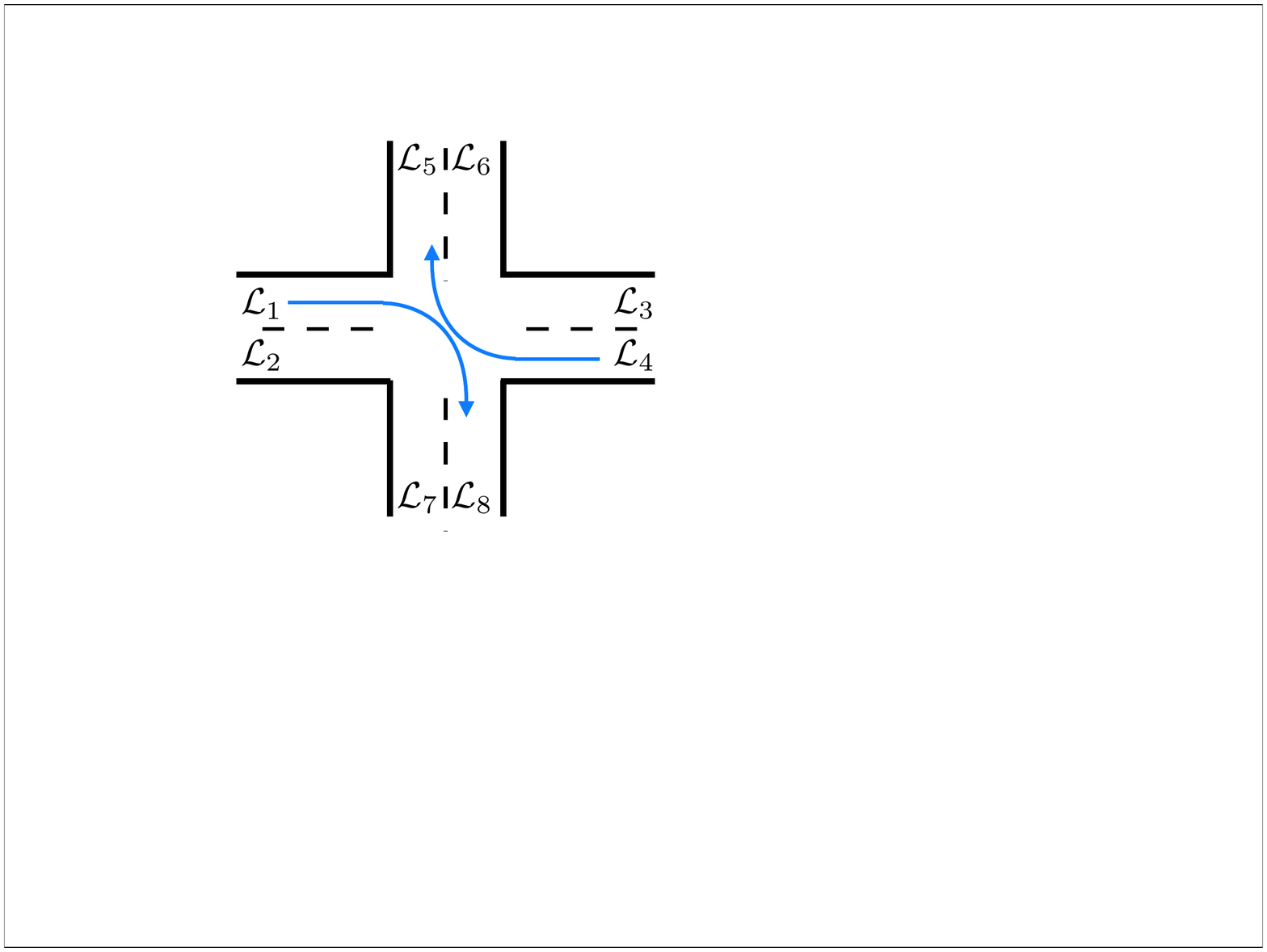}}
   \subfigure[]
        {\includegraphics[trim=5.5cm 10cm 13cm 4cm, clip=true, width=0.2\textwidth]{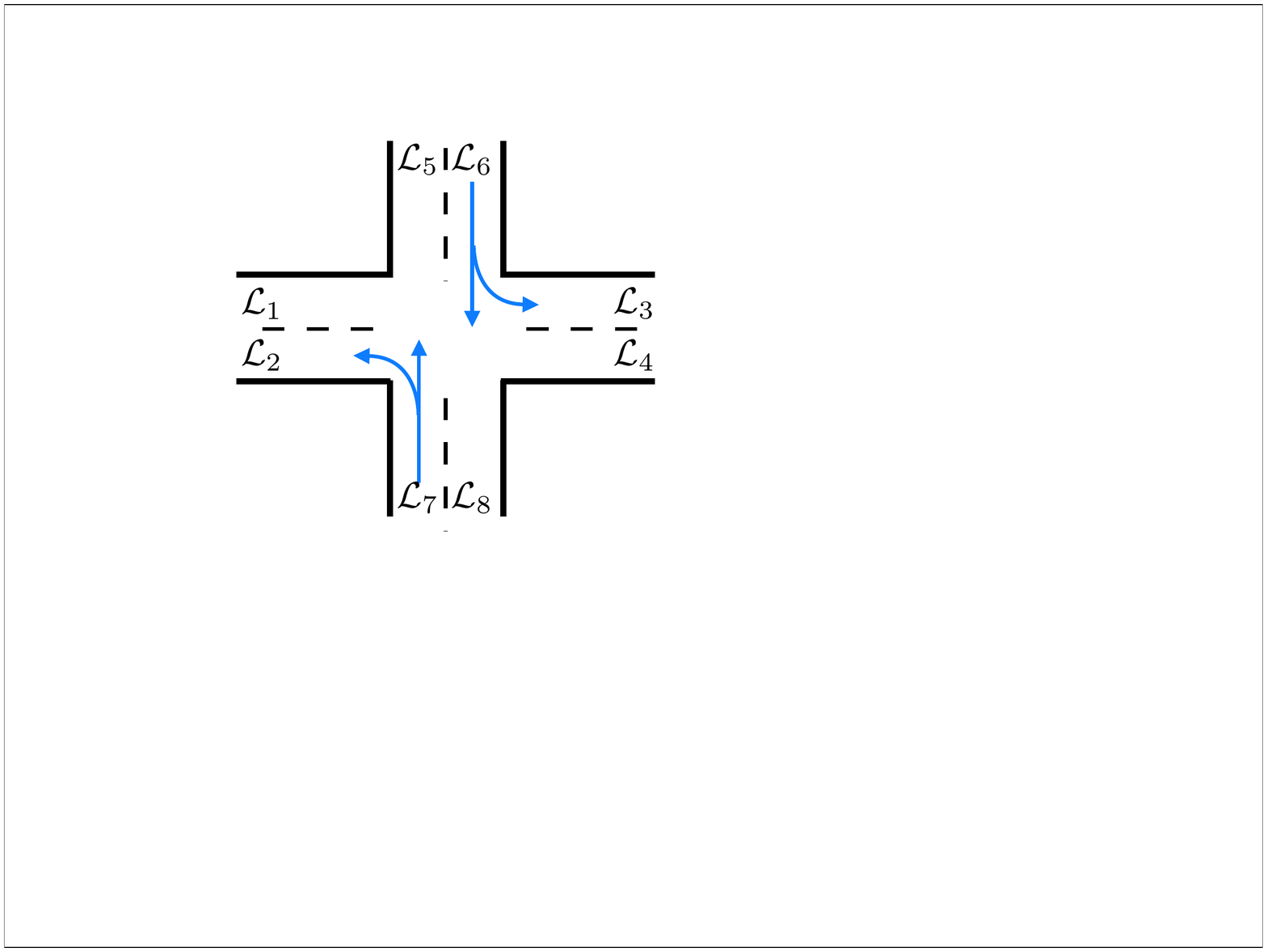}}
   \subfigure[]
        {\includegraphics[trim=5.5cm 10cm 13cm 4cm, clip=true, width=0.2\textwidth]{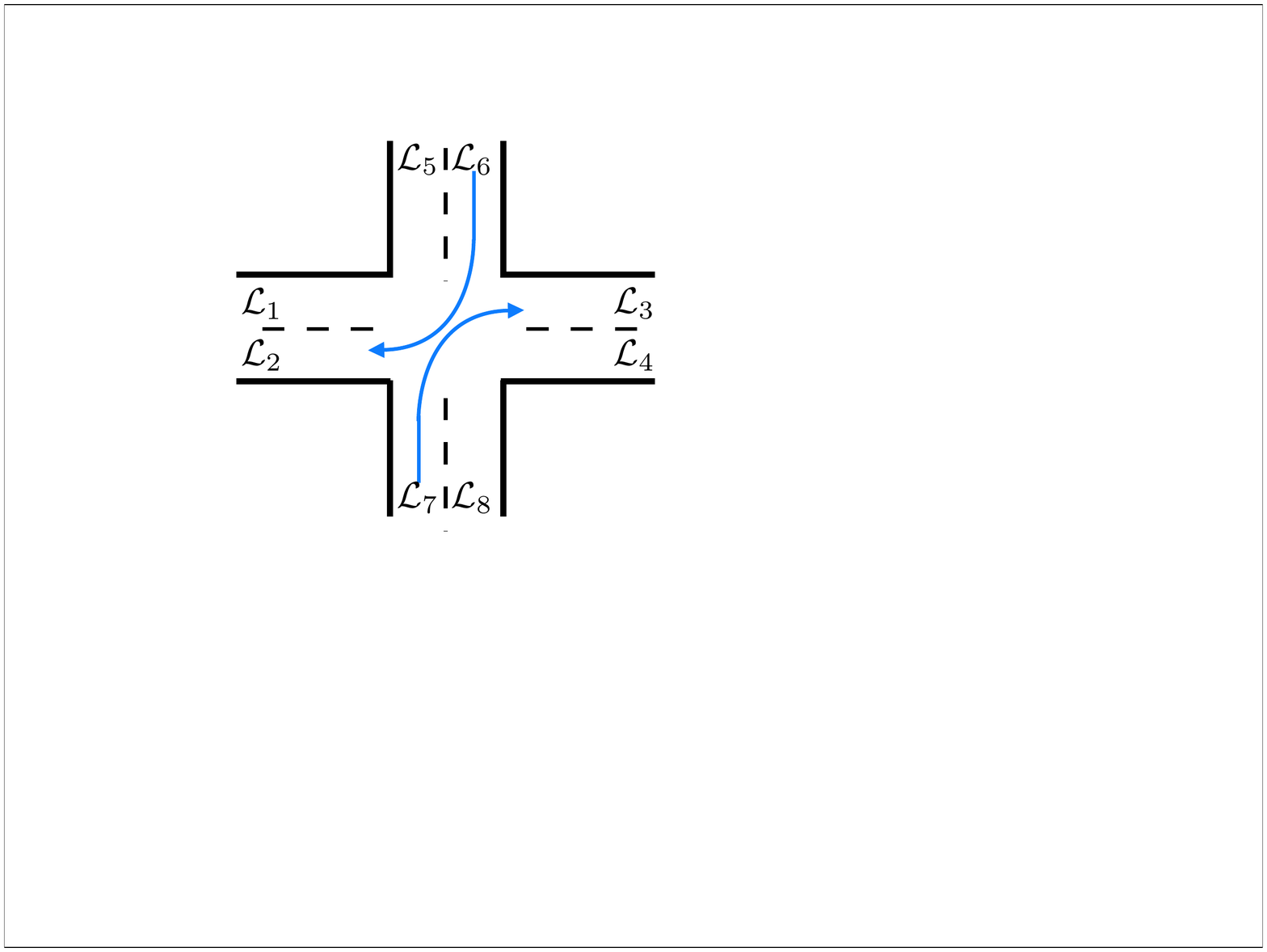}}
   \caption{A typical set $\{\P_1, \P_2, \P_3, \P_4\}$ of phases of a 4-way junction with links $\R_1, \ldots, \R_8$.
   (a) $\P_1 = \{(\R_1, \R_3), (\R_1, \R_5), (\R_4, \R_2), (\R_4, \R_8)\}$,
   (b) $\P_2 = \{(\R_1, \R_8), (\R_4, \R_5)\}$,
   (c) $\P_3 = \{(\R_7, \R_5), (\R_7, \R_2)$, $(\R_6, \R_8), (\R_6, \R_3)\}$, and
   (d) $\P_4 = \{(\R_7, \R_3), (\R_6, \R_2)\}$.}
  \vspace{-5mm}
  \label{fig:phase-ex}
\end{figure}

We assume that the traffic signal system operates in slotted time $t \in \naturals$.
During each time slot, vehicles may enter the network at any link. 
For each $a \in \{1, \ldots, N\}$, $i \in \{1, \ldots, L\}$, $t \in \naturals$, 
we let $Q_a(t) \in \naturals$ and $z_i(t) \in \Z_i$ represent the number of vehicles on $\R_a$
and the traffic state around $\J_i$, respectively, at the beginning of time slot $t$.
%$(\R_a, \R_b) \in \M_i$, 
%we let $Q_{i, t}^{\R_a, \R_b}$ represent the number of vehicles on $\R_a$ at the beginning of time slot $t$ that
%seek the movement to $\R_b$ through $\J_i$ and
In addition, for each $i \in \{1, \ldots, L\}$,
we define a function $\xi_{i} : \P_i \times \M_i \times \Z_i \to \naturals$ such that
$\xi_{i}(p, \R_a, \R_b, z)$ gives the rate (i.e., the number of vehicles per unit time) 
at which vehicles that can go from $\R_a$ to $\R_b$ through junction $\J_i$ 
under traffic state $z$ if phase $p$ is activated.
By definition, $\xi_{i}(p, \R_a, \R_b, z) = 0, \forall z \in \Z_i$ if $(\R_a, \R_b) \not\in p$, i.e., 
phase $p$ does not give the right of way to the traffic movement from $\R_a$ to $\R_b$.
%Note that in general $\xi_{i}$ may depend on the phase activated in the previous time slot and
%the number of vehicles on $\R_a$ and $\R_b$ at the beginning of time slot $t$.
%In addition, $\xi_{i,t}$ does not need to be known a priori
%but we assume that it is available at the beginning of time slot $t$.
When traffic state $z$ represents the case where the number of vehicles on $\R_a$ that 
seek the movement to $\R_b$ through $\J_i$ is large, 
$\xi_{i}(p, \R_a, \R_b, z)$ can be simply obtained by assuming saturated flow.

At the beginning of each time slot, the traffic signal controller determines the phase for each junction to be
activated during this time slot.
In this paper, we consider the traffic signal control problem as stated below.

\textbf{Traffic Signal Control Problem:} Design a traffic signal controller that determines the phase $p_i(t) \in \P_i$
for each junction $\J_i, i \in \{1, \ldots, L\}$ to be activated during each time slot $t \in \naturals$ such that the network throughput is maximized.
We assume that there exists a reliable traffic monitoring system that provides the queue length $Q_a(t)$ %$Q_{i, t}^{\R_a, \R_b}$
and traffic state $z_i(t)$ for each $a \in \{1, \ldots, N\}$, $i \in \{1, \ldots, L\}$  %$(\R_a, \R_b) \in \M_i$ 
at the beginning of each time slot $t \in \naturals$ to the controller.

%%%%%%%%%%%%%%%%%%%%%%%%%%%%%%%%%%%%%%%%%%%%%%%%%%%%%%%%%%%%%%%
\section{Backpressure-based Traffic Signal Controller}
\label{sec:contr}

In this section, we propose a distributed traffic signal control algorithm %that essentially implements 
that employs the idea from backpressure routing %for a single-commodity network 
as described in \cite{Tassiulas92Stability,NMR05,Georgiadis06}.
Unlike most of the traffic signal controllers considered in existing literature, 
our controller can be constructed and implemented in a completely distributed manner.
Furthermore, it does not require any knowledge about traffic arrival rates.
We end the section with a discussion of some basic properties of the proposed controller.

Our traffic signal controller consists of a set of local controllers $\controller_1, \ldots, \controller_L$
where local controller $\controller_i$ is associated with junction $\J_i$.
These local controllers are constructed and implemented independently%
\footnote{However, a synchronized operation among all the junctions is required so that 
control actions for all the junctions take place according to a common time clock.}
of one another.
Furthermore, each local controller does not require the global view of the road network.
Instead, it only requires information that is local to the junction with which it is associated.
At each time slot $t$, local controller $\controller_i$ computes the phase $p^* \in \P_i$
to be activated at junction $\J_i$ during time slot $t$ as described in Algorithm \ref{Alg:controller}.

\begin{algorithm}[t]
\SetInd{0.5em}{0.5em}
\KwIn{$z_i(t)$ and $Q_a(t)$ for all $a \in \{1, \ldots, N\}$ such that $(\R_a, \R_b) \in \M_i$ or $(\R_b, \R_a) \in \M_i$ for some $\R_b \in \R$}
\KwOut{$p^* \in \P_i$ to be activated during time slot $t$}
$S_p^* \leftarrow -\infty$\;
$p^* \leftarrow \emptyset$\;
\ForEach{$(\R_a, \R_b) \in \M_i$}{
$W_{ab} \leftarrow Q_a(t) - Q_b(t)$\;
}
\ForEach{$p \in \P_i$}{
  $S_p \leftarrow \sum_{(\R_a, \R_b) \in p} W_{ab} \xi_i(p, \R_a, \R_b, z_i(t))$\;
  \If {$S_p > S_p^*$}{
    $p^* \leftarrow p$\;
    $S_p^* = S_p$\;
  }
}
%\caption{Traffic signal control algorithm implemented in local controller $\controller_i$.}
\caption{Computation of phase $p^*$ to be activated during time slot $t$ at junction $\J_i$.}
\label{Alg:controller}
\end{algorithm}

Consider an arbitrary junction $\J_i \in \J$.
At the beginning of time slot $t$, we first compute (line 4 of Algorithm \ref{Alg:controller})
\begin{equation}
  \label{eq:Wab_traffic}
  W_{ab}(t) \triangleq Q_a(t) -  Q_b(t), %\max\{ Q_a(t) -  Q_b(t), 0\},
\end{equation}
for each pair $(\R_a, \R_b) \in \M_i$.
Then, for each phase $p \in \P_i$, we compute (line 6 of Algorithm \ref{Alg:controller})
\begin{equation}
  \label{eq:Sp}
  S_p(t) \triangleq \sum_{(\R_a, \R_b) \in p} W_{ab}(t) \xi_{i}(p, \R_a, \R_b, z_i(t)).
\end{equation}

The local controller $\controller_i$ then activates phase $p^* \in \P_i$ such that $S_{p^*} \geq S_p, \forall p \in \P_i$ during the time slot $t$
(line 7--9 of Algorithm \ref{Alg:controller}).
If there exist multiple options of $p^*$ that satisfy the inequality, the controller can pick one arbitrarily.
Note that since the number of possible phases for each junction is typically small (e.g., less than $10$),
the above computation and enumeration through all the possible phases can be practically performed in real time.

Our algorithm is similar in nature to backpressure routing for a single-commodity network.
In \cite{Tassiulas92Stability,NMR05,Georgiadis06}, it has been shown that backpressure routing
leads to maximum network throughput.
However, it is still premature to simply conclude that our backpressure-based traffic signal control algorithm 
inherits this property due to the following reasons.
First, backpressure routing requires that a commodity at least defines the destination of the object.
Implementing the algorithm for a single-commodity network implies that we assume that all the vehicles have a common destination,
which is not a valid assumption for our application.
Second, backpressure routing assumes that the controller has complete control over routing of the traffic around the network whereas
in our traffic signal control problem, the controller does not have control over the route picked by each driver.
Third, backpressure routing assumes that the network controller has control over the flow rate
of each link subject to the maximum rate imposed by the link constraint.
However, the traffic signal controller can only picks a phase $p_i(t)$ to be activated at each junction $\J_i$ during each time slot $t$ but 
does not have control over the flow rate of each traffic movement once $p_i(t)$ is activated.
To account for this lack of control authority, we slightly modify the definition of $W_{ab}(t)$ from
that used in backpressure routing.
%Notice the difference in the definition of $W_{ab}(t)$ in (\ref{eq:Wab_backpressure}) and in (\ref{eq:Wab_traffic})
%to account for this lack of control authority.
Finally, the optimality result of backpressure routing relies on the assumption that
all the queues have infinite buffer storage space.
Even though it is not reasonable to assume that all the links have infinite queue capacity,
for the rest of the paper, we assume that this is the case.
In practice, our algorithm is expected to work well when each link can accommodate a reasonably long queue.

Before evaluating the performance of our algorithm, 
we first provide its basic property, which is similar to the basic property of backpressure routing. % as formally stated in the following lemma.
Let $\P = \P_1 \times \ldots \times \P_L$ and $\Z = \Z_1 \times \ldots \times \Z_L$.
For each $a \in \{1, \ldots, N\}$, we define functions
$V^{out}_a : \P \times \Z \to \reals$ and $V^{in}_a : \P \times \Z \to \reals$ 
such that for any $\mathbf{p} \in \P$ and $\mathbf{z} \in \Z$,
\begin{equation}
\begin{array}{rcl}
V^{out}_a(\mathbf{p}, \mathbf{z}) &=&  
\displaystyle{\sum_{\scriptsize \begin{array}{c}b,i \hbox{ s.t. }\\ (\R_a, \R_b) \in \M_i \end{array}} \hspace{-6mm} \xi_{i}(p_i, \R_a, \R_b, z_i)},\\
V^{in}_a(\mathbf{p}, \mathbf{z}) &=& 
\displaystyle{\sum_{\scriptsize \begin{array}{c}b,i \hbox{ s.t. }\\ (\R_b, \R_a) \in \M_i \end{array}} \hspace{-6mm} \xi_{i}(p_i, \R_b, \R_a, z_i)},
\end{array}
\end{equation}
where for each $i \in \{1, \ldots, L\}$,
$p_i \in \P_i$ is the element of $\mathbf{p}$ that corresponds to the phase of junction $\J_i$ and
$z_i \in \Z_i$ is the element of $\mathbf{z}$ that corresponds to the traffic state of junction $\J_i$.

\begin{lemma}
\label{lem:basic_prop}
Consider an arbitrary time slot $t \in \naturals$.
Let $\mathbf{z}(t) \in \Z$ be a vector of traffic states of
all the junctions during time slot $t$.
For each $i \in \{1, \ldots, L\}$, 
let $p_i^*(t)$ denote the phase determined by Algorithm \ref{Alg:controller} to be activated at junction $\J_i$ during time slot $t$
and $\tilde{p}_i(t)$ be the phase to be activated at junction $\J_i$ determined by any other algorithm for junction $\J_i$ during time slot $t$.
Then, 
\begin{equation}
\label{eq:lem:basic_prop}
\begin{array}{l}
\displaystyle{\sum_a Q_a(t) \Big( V^{out}_a \big(\tilde{\mathbf{p}}(t), \mathbf{z}(t) \big) - V^{in}_a \big(\tilde{\mathbf{p}}(t), \mathbf{z}(t) \big) \Big)} \\
\hspace{8mm}\leq
\displaystyle{\sum_a Q_a(t) \Big( V^{out}_a \big(\mathbf{p}^*(t), \mathbf{z}(t) \big) - V^{in}_a \big(\mathbf{p}^*(t), \mathbf{z}(t) \big) \Big)},
\end{array}
\end{equation}
where $\tilde{\mathbf{p}}(t) = [\tilde{p}_i(t)]$ and $\mathbf{p}^*(t) = [p^*_i(t)]$.
\end{lemma}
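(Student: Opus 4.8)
The plan is to show that the global objective appearing on both sides of \eqref{eq:lem:basic_prop} separates exactly into a sum of the per-junction quantities $S_p(t)$ that Algorithm \ref{Alg:controller} maximizes locally. Once this separability is established, the lemma follows because a sum of independently chosen terms is maximized by maximizing each term, which is precisely what the algorithm does at each junction.

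First I would rewrite $\sum_a Q_a(t)\big(V^{out}_a(\mathbf{p},\mathbf{z}(t)) - V^{in}_a(\mathbf{p},\mathbf{z}(t))\big)$ by substituting the definitions of $V^{out}_a$ and $V^{in}_a$ and interchanging the order of summation so that the outer sum runs over traffic movements rather than over links. The key observation is that each movement $(\R_a,\R_b)\in\M_i$ contributes to exactly two terms of the combined double sum: it appears in $V^{out}_a$ through its entering link $\R_a$, contributing $Q_a(t)\,\xi_i(p_i,\R_a,\R_b,z_i)$ to the outflow part, and in $V^{in}_b$ through its exiting link $\R_b$, contributing $Q_b(t)\,\xi_i(p_i,\R_a,\R_b,z_i)$ to the inflow part. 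Taking the difference, the net contribution of that movement is $\big(Q_a(t)-Q_b(t)\big)\,\xi_i(p_i,\R_a,\R_b,z_i)$.

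Next I would group the resulting sum by junction. Recalling the definition $W_{ab}(t)\triangleq Q_a(t)-Q_b(t)$ from \eqref{eq:Wab_traffic}, and using the fact that $\xi_i(p,\R_a,\R_b,z)=0$ whenever $(\R_a,\R_b)\notin p$, the sum over all movements in $\M_i$ collapses to a sum over the movements of the activated phase $p_i$ only. This is intended to yield, for any phase vector $\mathbf{p}=[p_i]$, the identity
\begin{equation*}
\sum_a Q_a(t)\big(V^{out}_a(\mathbf{p},\mathbf{z}(t)) - V^{in}_a(\mathbf{p},\mathbf{z}(t))\big) = \sum_{i=1}^{L} S_{p_i}(t),
\end{equation*}
with $S_{p_i}(t)$ exactly as defined in \eqref{eq:Sp}.

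With this decomposition in hand the conclusion is immediate. The right-hand side is a sum of $L$ terms, one per junction, and the term $S_{p_i}(t)$ depends only on the phase $p_i$ chosen at junction $\J_i$; hence the global objective is maximized by choosing each $p_i$ to maximize $S_{p_i}(t)$ independently. This is precisely the rule implemented by Algorithm \ref{Alg:controller}, which returns $p_i^*(t)$ satisfying $S_{p_i^*}(t)\geq S_p(t)$ for all $p\in\P_i$. Applying the identity to both $\mathbf{p}^*(t)$ and $\tilde{\mathbf{p}}(t)$ and summing the $L$ local inequalities gives \eqref{eq:lem:basic_prop}. The only delicate step is the reindexing in the second paragraph; getting the bookkeeping right so that every movement is counted once through its entering link and once through its exiting link is the crux, and after that the argument reduces to the separability of the objective across junctions together with the local optimality guaranteed by the algorithm.
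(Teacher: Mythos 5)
Your proposal is correct and follows essentially the same route as the paper's proof: the same reindexing identity converting $\sum_a Q_a(t)\big(V^{out}_a - V^{in}_a\big)$ into a sum of $W_{ab}(t)\,\xi_i(p_i,\R_a,\R_b,z_i(t))$ over movements, followed by the per-junction inequality $S_{\tilde{p}_i(t)} \leq S_{p^*_i(t)}$ summed over $i \in \{1,\ldots,L\}$. Your explicit remark that the sum over $\M_i$ collapses to a sum over the activated phase (since $\xi_i$ vanishes off the phase) is a detail the paper leaves implicit, but the argument is the same.
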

\begin{proof}
First, we note the following identity
\begin{equation}
\label{pf:basic_prop1}
\begin{array}{l}
\displaystyle{\sum_a Q_a(t) \Big( V^{out}_a\big(\mathbf{p}(t), \mathbf{z}(t)\big) - V^{in}_a\big(\mathbf{p}(t), \mathbf{z}(t)\big) \Big)}\\
\hspace{8mm}=\hspace{-5mm}
\displaystyle{\sum_{\scriptsize \begin{array}{c}a,b,i \hbox{ s.t. }\\ (\R_a, \R_b) \in \M_i \end{array}} \hspace{-6mm} \xi_{i} \big(p_i(t), \R_a, \R_b, z_i(t) \big)
W_{ab}(t)},
\end{array}
\end{equation}
for all $\mathbf{p}(t) \in \P$ and $\mathbf{z}(t) \in \Z$.

Since for each $i \in \{1, \ldots, L\}$, $p^*_i(t)$ is chosen such that $S_{p^*_i(t)} \geq S_{\tilde{p}_i(t)}$,
%and $\xi_{i}(p, \R_a, \R_b, z_i(t)) = 0$ for all $p \in \P_i$ and $(\R_a, \R_b) \not\in p$,
we get
\begin{equation}
\label{pf:basic_prop2}
\begin{array}{l}
\displaystyle{\sum_{\scriptsize \begin{array}{c}a,b \hbox{ s.t. }\\ (\R_a, \R_b) \in \M_i \end{array}} \hspace{-6mm} \xi_i \big(\tilde{p}_i(t), \R_a, \R_b, z_i(t) \big)
W_{ab}(t)}\\
\hspace{8mm}\leq
\displaystyle{\sum_{\scriptsize \begin{array}{c}a,b \hbox{ s.t. }\\ (\R_a, \R_b) \in \M_i \end{array}} \hspace{-6mm} \xi_i \big(p^*_i(t), \R_a, \R_b, z_i(t) \big)
W_{ab}(t)},
\end{array}
\end{equation}
for all $i \in \{1, \ldots, L\}$.
The result in (\ref{eq:lem:basic_prop}) can be obtained by summing the inequality in (\ref{pf:basic_prop2}) over
$i \in \{1, \ldots, L\}$ and using the identity in (\ref{pf:basic_prop1}).
\end{proof}

%%%%%%%%%%%%%%%%%%%%%%%%%%%%%%%%%%%%%%%%%%%%%%%%%%%%%%%%%%%%%%%
\section{Controller Performance Evaluation}
\label{sec:contr_evaluation}

Let $\Lambda$ be the capacity region of the road network as defined in Definition \ref{def:capacity region}.
Assume that $\mathbf{z}(t) = \left[ z_i(t) \right]$ evolve according to a finite state, irreducible, aperiodic Markov chain.
Let $\pi_\mathbf{z}$ represent the time average fraction of time that $\mathbf{z}(t) = \mathbf{z}$, i.e., with probability 1, we have
$\lim_{t \to \infty} \frac{1}{t} \sum_{\tau = 0}^{t-1} 1_{[\mathbf{z}(\tau) = \mathbf{z}]} = \pi_\mathbf{z}$, for all 
$\mathbf{z} \in \Z$
where $1_{[\mathbf{z}(\tau) = \mathbf{z}]}$ is an indicator function that takes the value 1 if $\mathbf{z}(\tau) = \mathbf{z} $ 
and takes the value 0 otherwise.
In addition, we let $\M = \bigcup_i \M_i$ be the set of all the possible traffic movements. %, i.e., 
%for each $m \in \M$, there exists $i \in \{1, \ldots, L\}$ such that $m \in \M_i$.
%For the simplicity of the presentation, we assume that such $i$ is unique for each $m \in \M$.
For the simplicity of the presentation, we assume that $\M_i \cap \M_j = \emptyset$ for all $i \not= j$.
For each $\mathbf{p} \in \P$, $\mathbf{z} \in \Z$,
we define a vector $\boldsymbol{\xi}(\mathbf{p}, \mathbf{z})$ whose $k^{th}$ element is equal to
$\xi_i(p_i, \R_a, \R_b, z_i)$ where $(\R_a, \R_b)$ is the $k^{th}$ traffic movement in $\M$,
$i$ is the (unique) index satisfying $(\R_a, \R_b) \in \M_i$ and 
$p_i$ and $z_i$ are the $i^{th}$ element of $\mathbf{p}$ and $\mathbf{z}$, respectively.
Define
\begin{equation}
\label{eq:Gamma}
\Gamma \triangleq \sum_{\mathbf{z} \in \Z} \pi_\mathbf{z} 
\conv \Big\{ \left[ \boldsymbol{\xi}(\mathbf{p}, \mathbf{z}) \right] 
\hspace{1mm}\Big|\hspace{1mm} \mathbf{p} \in \P_1 \times \ldots \times \P_L \Big\},
\end{equation}
where for any set $\mathcal{S}$, $\conv\{\mathcal{S}\}$ represents the convex hull of $\mathcal{S}$.

Additionally, we assume that the process of vehicles exogenously entering the network is rate ergodic
and for all for all $a \in \{1, \ldots, N\}$, there are always enough vehicles on $\R_a$ such that 
for all $i \in \{1, \ldots, L\}$, $b \in \{1, \ldots, N\}$, $p \in \P_i$, $z \in \Z_i$ such that $(\R_a, \R_b) \in \M_i$, 
vehicles can move from $\R_a$ to $\R_b$ through junction $\J_i$ 
at rate  $\xi_i(p, \R_a, \R_b, z)$ under traffic state $z$ if phase $p$ is activated at $\J_i$.
For each $a \in \{1, \ldots, N\}$, let $\lambda_a$ be the time average rate with which the number of new vehicles that exogenously enter
the network at link $\R_a$  during each time slot is admissible.
Let $\boldsymbol{\lambda} = \left[ \lambda_a \right]$ represent the arrival rate vector.

Before deriving the optimality result for our backpressure-based traffic signal control algorithm,
we first characterize the capacity region of the road network, as formally stated in the following lemma.

\begin{lemma}
The capacity region of the network is given by the set $\Lambda$ consisting of
all the rate vectors $\boldsymbol{\lambda}$ such that
there exists a rate vector $\mathbf{G} \in \Gamma$ together with flow variables
$f_{ab}$ for all $a,b \in \{1, \ldots, N\}$ satisfying
%Suppose the network can be stabilized by some policy
%(possibly one that bases its decisions upon complete knowledge of future arrivals) 
%when the arrival rate vector is $\boldsymbol{\lambda}$.
%Then, there exists a rate vector $\mathbf{G} \in \cl\{\Gamma\}$ together with flow variables
%$f_{ab}$ for all $a,b \in \{1, \ldots, N\}$ satisfying
\begin{eqnarray}
\label{eq:f-nonnegative}
f_{ab} \geq 0, &&\forall a,b \in \{1, \ldots, N\},\\
\label{eq:f-conservation}
\lambda_a = \sum_b f_{ab} - \sum_c f_{ca}, &&\forall a \in \{1, \ldots, N\},\\
\label{eq:f-zero}
f_{ab} = 0, &&\forall a,b \in \{1, \ldots, N\} \\
\nonumber
&& \hbox{such that } (\R_a, \R_b) \not\in \M,
\end{eqnarray}
\begin{eqnarray}
\label{eq:f-constraint}
f_{ab} = G_{ab}, &&\forall a,b \in \{1, \ldots, N\}  \\
\nonumber
&& \hbox{such that } (\R_a, \R_b) \in \M,
\end{eqnarray}
where $G_{ab}$ is the element of $\mathbf{G}$ that corresponds to the rate of traffic movement
$(\R_a, \R_b)$.
\end{lemma}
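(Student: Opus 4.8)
The plan is to show that the set described by the constraints \eqref{eq:f-nonnegative}--\eqref{eq:f-constraint}, which I will call $\Lambda^*$, coincides with the capacity region $\Lambda$ by verifying the two defining properties in Definition~\ref{def:capacity region}: that membership in $\Lambda^*$ is necessary for stability under \emph{any} control strategy (even one with perfect knowledge of future events), and that membership in $\mathrm{int}(\Lambda^*)$ is sufficient for stabilization by a policy without a-priori knowledge of the future. The geometric content I want to expose is that $\Gamma$ in \eqref{eq:Gamma} is exactly the set of long-run average per-movement service-rate vectors realizable by selecting phases as a function of the (Markovian) traffic state, and that a stable queue forces flow conservation, so that the realized flows $f_{ab}=G_{ab}$ must balance the exogenous arrival rates $\lambda_a$.

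For necessity I would fix an arbitrary stabilizing policy and form the empirical average service vector $\bar{\mathbf G}(t)$ whose movement-$(\R_a,\R_b)$ entry is $\frac1t\sum_{\tau=0}^{t-1}\xi_i\big(p_i(\tau),\R_a,\R_b,z_i(\tau)\big)$. The first key step is to confine every limit point of $\bar{\mathbf G}(t)$ to $\Gamma$: I would partition the time axis according to the value of $\mathbf z(\tau)$, observe that the conditional average service over the slots with $\mathbf z(\tau)=\mathbf z$ is a convex combination of the vectors $\{\boldsymbol\xi(\mathbf p,\mathbf z):\mathbf p\in\P\}$ and hence lies in their convex hull, and invoke irreducibility and aperiodicity of the chain so that the fraction of such slots converges almost surely to $\pi_{\mathbf z}$; summing over $\mathbf z$ then reproduces exactly the representation in \eqref{eq:Gamma}. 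The second key step is that strong stability forces the time-average departure rate from each link to equal its time-average arrival rate; combining this with the rate ergodicity of the exogenous process (Definition~\ref{def:rate}), the limiting flows $f_{ab}:=\bar G_{ab}$ satisfy the conservation law \eqref{eq:f-conservation} together with \eqref{eq:f-nonnegative}, \eqref{eq:f-zero} and \eqref{eq:f-constraint}, whence $\boldsymbol\lambda\in\Lambda^*$.

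For sufficiency I would take $\boldsymbol\lambda\in\mathrm{int}(\Lambda^*)$, pick a witnessing $\mathbf G\in\Gamma$ and flows $f$ with strict slack $\epsilon>0$ in the conservation constraints, and use Carath\'eodory's theorem to write $\mathbf G=\sum_{\mathbf z}\pi_{\mathbf z}\sum_k \theta^{\mathbf z}_k\,\boldsymbol\xi(\mathbf p^{\mathbf z}_k,\mathbf z)$ as a finite convex combination. This furnishes an explicit non-anticipating stationary randomized policy: upon observing state $\mathbf z$, activate phase $\mathbf p^{\mathbf z}_k$ with probability $\theta^{\mathbf z}_k$, so that the expected service rates equal $\mathbf G$. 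Substituting this policy into the quadratic Lyapunov function $L(\mathbf U)=\sum_a U_a^2$ and expanding the one-slot drift, the cross terms reduce through the conservation slack to $-\,\epsilon\sum_a U_a(t)$ up to a bounded constant $B$ (finite because $A_{max}$ and the $\xi_i$ are bounded), so the hypothesis of Proposition~\ref{prop:LyapunovStability} is met and the network is strongly stable.

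I expect the necessity direction to be the main obstacle, and within it the step that pins every limit point of the empirical service vector inside $\Gamma$ uniformly over all, possibly anticipating, policies. The care required there is that the per-slot averages need not converge, so I would argue along convergent subsequences using compactness of $\Gamma$ (a finite Minkowski-type sum of convex hulls of finitely many points, hence a compact convex polytope), and I would cleanly separate the almost-sure convergence of the state frequencies to $\pi_{\mathbf z}$, supplied by the ergodic theorem for the irreducible aperiodic chain, from the purely deterministic convexity bound that holds on each state class irrespective of how the policy correlates its phase choices with the history.
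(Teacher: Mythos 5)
Your overall architecture is the same as the paper's: necessity by confining the empirical per-movement service averages to $\Gamma$ and letting stability force flow conservation, sufficiency by realizing a witnessing $\mathbf{G}\in\Gamma$ through a stationary randomized phase policy and closing with a quadratic Lyapunov drift. The difference is that the paper compresses both directions into citations of \cite{NMR05}, while you unpack them; your necessity direction is in fact more complete than the paper's own sketch, and its two delicate points are handled correctly -- the convexity bound on each state class is pathwise and holds for arbitrary (even anticipating, randomized) policies, and the non-convergence of per-slot averages is dealt with by subsequence extraction in the compact polytope $\Gamma$, combined with the almost-sure convergence of the state frequencies to $\pi_{\mathbf{z}}$ and the fact that strong stability returns each queue below any fixed level along arbitrarily large times, so that $Q_a(\tilde t)/\tilde t\to 0$ along the chosen subsequence and \eqref{eq:f-conservation} holds in the limit.

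The genuine gap is in your sufficiency step. The lemma sits under the section's standing assumption that $\mathbf{z}(t)$ is a finite-state irreducible aperiodic Markov chain, \emph{not} i.i.d., and your one-slot drift needs $\expect\{\xi_i(\hat p_i(t),\R_a,\R_b,z_i(t))\mid \mathbf{Q}(t)\}=G_{ab}$. That identity fails for a Markov-modulated state: $\mathbf{Q}(t)$ is a function of past states, which are correlated with $\mathbf{z}(t)$ through the chain, so conditioning on the queue vector biases the distribution of the current state away from $\pi$, and the cross terms do not reduce to $-\epsilon\sum_a Q_a(t)$ slot by slot. The repair is the standard $T$-slot drift argument: by the decaying-memory property of the ergodic chain and the third bullet of Definition \ref{def:rate}, for any $\delta>0$ there is a $T$ such that the conditional $T$-slot averages of arrivals and of the randomized policy's service rates are within $\delta$ of $\lambda_a$ and $G_{ab}$ respectively, after which one applies a $T$-step analogue of Proposition \ref{prop:LyapunovStability}. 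These are exactly the ``certain convergence bounds'' the paper imports from \cite{NMR05}. As written, your drift computation is valid only in the i.i.d. special case -- which is what the paper's Corollary \ref{cor:randomized} and Theorem later assume, but the lemma itself claims the Markov case, so this step needs the multi-slot version to match the stated generality.
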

\begin{proof}
%Here, we prove that $\boldsymbol{\lambda} \in \Lambda$ is a necessary condition for network stability, 
%considering all possible strategies for choosing the control variables (including strategies that have perfect knowledge of future events).
%Later in this section, we show that $\boldsymbol{\lambda}$ strictly interior to $\Lambda$ is a sufficient
%condition for the network to be stabilized by proving that there exists a stationary randomized control algorithm that
%stabilizes the network under such arrival rates.
%
First, we prove that $\boldsymbol{\lambda} \in \Lambda$ is a necessary condition for network stability, 
considering all possible strategies for choosing the control variables (including strategies that have perfect knowledge of future events).
Consider an arbitrary time $t$.
For each $a \in \{1, \ldots, N\}$, let $X_a(t)$ denote the total number of vehicles that exogenously enters the road network
at link $\R_a$ during time interval $[0,t]$.
Suppose the network can be stabilized by some policy,
possibly one that bases its decisions upon complete knowledge of future arrivals.
For each $a,b \in \{1, \ldots, N\}$, let $Q_a(t)$ and $F_{ab}(t)$ represent the number of vehicles left on $\R_a$ at time $t$ 
and the total number of vehicles executing the $(\R_a, \R_b)$ movement during time interval $[0,t]$ under this stabilizing policy.
Due to flow conservation and link constraints, we have
\begin{equation}
\label{eq:F-nonnegative}
F_{ab}(t) \geq 0,
\end{equation}
\begin{equation}
\label{eq:F-conservation}
X_a(t) - Q_a(t) = \displaystyle{\sum_b F_{ab}(t) - \sum_c F_{ca}(t)},
\end{equation}
\begin{equation}
\label{eq:F-constraint}
F_{ab}(t) = \left\{
\begin{array}{ll}
0, &\hspace{-2mm}\hbox{if } (\R_a, \R_b) \not\in \M,\\
\hspace{-2mm}\displaystyle{\int_{\tau=0}^{t} \hspace{-3mm}\xi_i(p_i(\tau), \R_a, \R_b, z_i(\tau)) d\tau}, &\hspace{-2mm}\hbox{if } (\R_a, \R_b) \in \M_i
\end{array} \right.
\end{equation}
%
%\begin{eqnarray}
%\label{eq:F-nonnegative}
%&F_{ab}(t) \geq 0, 
%&\hspace{-22mm}\forall a,b \in \{1, \ldots, N\},\\
%\label{eq:F-conservation}
%&\hspace{-2mm}X_a(t) - Q_a(t) = \displaystyle{\sum_b F_{ab}(t) - \sum_c F_{ca}(t)}, \\
%\nonumber
%&&\hspace{-22mm}\forall a \in \{1, \ldots, N\},\\
%\label{eq:F-zero}
%&F_{ab}(t) = 0, 
%&\hspace{-22mm}\forall a,b \in \{1, \ldots, N\} \\
%\nonumber
%&&\hspace{-22mm}\hbox{s.t. } (\R_a, \R_b) \not\in \M,\\
%\label{eq:F-constraint}
%&F_{ab}(t) = \displaystyle{\sum_{\tau=0}^{t-1}\xi_i(p_i(\tau), \R_a, \R_b, z_i(\tau))}, \\
%\nonumber
%&&\hspace{-22mm}\forall i \in \{1, \ldots, L\},\\
%\nonumber
%&&\hspace{-22mm}a,b \in \{1, \ldots, N\} \\
%\nonumber
%&&\hspace{-22mm}\hbox{s.t. } (\R_a, \R_b) \in \M_i,
%\end{eqnarray}
for all $a,b \in \{1, \ldots, N\}$
where $p_i(\tau)$ and $z_i(\tau)$ are the phase and traffic state, respectively, of junction $\J_i$ at time $\tau$.

For each $a,b \in \{1, \ldots, N\}$, define $f_{ab} \triangleq F_{ab}(\tilde{t})/\tilde{t}$ for some arbitrarily large time $\tilde{t}$.
It is clear from (\ref{eq:F-nonnegative}) and (\ref{eq:F-constraint}) that (\ref{eq:f-nonnegative}) and (\ref{eq:f-zero}) are satisfied.
In addition, we can follow the proof in \cite{NMR05} to show that there exists a sample paths $F_{ab}(t)$
such that $f_{ab}$ comes arbitrarily close to satisfying (\ref{eq:f-conservation}) and (\ref{eq:f-constraint}).
As a result, it can be shown that $\boldsymbol{\lambda}$ is a limit point of the capacity region $\Lambda$.
%Hence, it follows that they can be satisfied if each nonzero entry of the rate vector $\boldsymbol{\lambda}$
%is reduced by an arbitrarily small amount.
%This proves that $\boldsymbol{\lambda}$ is a limit point of the capacity region $\Lambda$.
Since $\Lambda $ is compact and hence contains its limit points, it follows that $\boldsymbol{\lambda} \in \Lambda$.

Next, we show that $\boldsymbol{\lambda}$ strictly interior to $\Lambda$ is a sufficient condition for network stability,
considering only strategies that do not have a-priori knowledge of future events.
Suppose the rate vector $\boldsymbol{\lambda}$ is such that there exists $\boldsymbol{\epsilon} > 0$ such that
$\boldsymbol{\lambda} + \boldsymbol{\epsilon} \in \Lambda$.
%Suppose there exist $\mathbf{G} \in \Gamma$ and $f_{ab}$ for all $a,b \in \{1, \ldots, N\}$ satisfying
%(\ref{eq:f-nonnegative})-(\ref{eq:f-constraint}).
Let $\mathbf{G} \in \Gamma$ be a transmission rate vector associated with the input rate vector $\boldsymbol{\lambda} + \boldsymbol{\epsilon}$
according to the definition of $\Lambda$.
It has been proved in \cite{NMR05} that there exists a stationary randomized policy
$\tilde{p}_i(\tau)$ for each $i \in \{1, \ldots, L\}$ that satisfies certain convergence bounds and such that
for each $(\R_a, \R_b) \in \M_i$,
$\lim_{t \to \infty} \frac{1}{t} \sum_{\tau=0}^{t} \xi_i(\tilde{p}_i(\tau), \R_a, \R_b, z_i(\tau)) = G_{ab}$.
In addition, such a policy stabilizes the system.
\end{proof}

\begin{corollary}
\label{cor:randomized}
Suppose %$\Gamma = \cl\{\Gamma\}$ and if 
$\mathbf{z}(t)$ is i.i.d. from slot to slot.
Then, $\boldsymbol{\lambda}$ is within the capacity region $\Lambda$
if and only if there exists a stationary randomized control algorithm that makes phase decisions $\hat{\mathbf{p}}(t)$ 
based only on the current traffic state $\mathbf{z}(t)$, and that yields for all $a \in \{1, \ldots, N\}$, $t \in \naturals$,
\begin{equation}
\begin{array}{c}
\expect \Bigg\{ V^{out}_a \big(\hat{\mathbf{p}}(t), \mathbf{z}(t) \big) 
 - V^{in}_a \big(\hat{\mathbf{p}}(t), \mathbf{z}(t) \big) \Bigg\}
= \lambda_a,
\end{array}
\end{equation}
where the expectation is taken with respect to the random traffic state $\mathbf{z}(t)$ and the (potentially) random control action based on this state.
\end{corollary}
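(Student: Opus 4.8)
The plan is to combine the capacity-region characterization established in the preceding lemma with a direct correspondence between points of $\Gamma$ and stationary randomized policies whose phase decisions depend only on the current state $\mathbf{z}(t)$. The key observation is that, since $V^{out}_a$ and $V^{in}_a$ are just sums of selected components of $\boldsymbol{\xi}(\mathbf{p}, \mathbf{z})$, for any rate vector $\mathbf{G}$ the net-outflow expression $\sum_b G_{ab} - \sum_c G_{ca}$ (with $G_{ab}$ the component of $\mathbf{G}$ for movement $(\R_a, \R_b) \in \M$, and $G_{ab} = 0$ otherwise) is precisely what the flow equations (\ref{eq:f-conservation})--(\ref{eq:f-constraint}) force to equal $\lambda_a$ once we set $f_{ab} = G_{ab}$ on valid movements and $f_{ab} = 0$ elsewhere. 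Hence, by the preceding lemma, $\boldsymbol{\lambda} \in \Lambda$ if and only if there exists $\mathbf{G} \in \Gamma$ with $\lambda_a = \sum_b G_{ab} - \sum_c G_{ca}$ for all $a$; the corollary then reduces to showing that $\mathbf{G} \in \Gamma$ is equivalent to $\mathbf{G}$ being the expected service-rate vector of such a randomized policy.

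For the reverse direction, I would start from a stationary randomized policy $\hat{\mathbf{p}}(t)$ depending only on $\mathbf{z}(t)$ with $\expect\{V^{out}_a(\hat{\mathbf{p}}(t), \mathbf{z}(t)) - V^{in}_a(\hat{\mathbf{p}}(t), \mathbf{z}(t))\} = \lambda_a$. Setting $\mathbf{G} = \expect\{\boldsymbol{\xi}(\hat{\mathbf{p}}(t), \mathbf{z}(t))\}$ and conditioning on the state, the i.i.d. hypothesis gives $\mathbf{G} = \sum_{\mathbf{z} \in \Z} \pi_\mathbf{z} \expect\{\boldsymbol{\xi}(\hat{\mathbf{p}}(t), \mathbf{z}) \mid \mathbf{z}(t) = \mathbf{z}\}$; since for each fixed $\mathbf{z}$ the conditional expectation is a convex combination of the finitely many points $\{\boldsymbol{\xi}(\mathbf{p}, \mathbf{z}) : \mathbf{p} \in \P\}$, it lies in the corresponding convex hull, so $\mathbf{G} \in \Gamma$ by the definition (\ref{eq:Gamma}). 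Linearity of expectation then converts the policy's net-flow identity into $\lambda_a = \sum_b G_{ab} - \sum_c G_{ca}$, and the observation above yields $\boldsymbol{\lambda} \in \Lambda$.

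For the forward direction, I would invoke the preceding lemma to obtain $\mathbf{G} \in \Gamma$ realizing $\boldsymbol{\lambda}$ and then build the required policy from the convex-hull structure of $\Gamma$. Writing $\mathbf{G} = \sum_{\mathbf{z}} \pi_\mathbf{z} \mathbf{G}^{(\mathbf{z})}$ with $\mathbf{G}^{(\mathbf{z})} \in \conv\{[\boldsymbol{\xi}(\mathbf{p}, \mathbf{z})] : \mathbf{p} \in \P\}$, each $\mathbf{G}^{(\mathbf{z})}$ is a finite convex combination $\sum_{\mathbf{p} \in \P} \beta^{(\mathbf{z})}_\mathbf{p} \boldsymbol{\xi}(\mathbf{p}, \mathbf{z})$, and because the coefficients $\beta^{(\mathbf{z})}_\mathbf{p}$ are nonnegative and sum to one they define a probability distribution over $\P$ for each state. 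The policy that, whenever $\mathbf{z}(t) = \mathbf{z}$, selects phase vector $\mathbf{p}$ with probability $\beta^{(\mathbf{z})}_\mathbf{p}$ then satisfies $\expect\{\boldsymbol{\xi}(\hat{\mathbf{p}}(t), \mathbf{z}(t))\} = \sum_{\mathbf{z}} \pi_\mathbf{z} \sum_{\mathbf{p}} \beta^{(\mathbf{z})}_\mathbf{p} \boldsymbol{\xi}(\mathbf{p}, \mathbf{z}) = \mathbf{G}$, and again by linearity $\expect\{V^{out}_a - V^{in}_a\} = \sum_b G_{ab} - \sum_c G_{ca} = \lambda_a$ for every $a$ and every $t$.

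The main obstacle I anticipate is bookkeeping rather than any deep step: I must keep the indexing between the component representation $\boldsymbol{\xi}(\mathbf{p}, \mathbf{z})$ and the movement-indexed quantities $V^{out}_a, V^{in}_a, G_{ab}, f_{ab}$ consistent, and I must use the i.i.d. hypothesis explicitly to justify both that $\pi_\mathbf{z}$ is the per-slot state probability (so the expected rates are time-invariant, giving the equality for all $t$) and that conditioning on $\mathbf{z}(t)$ produces exactly the weighted-convex-hull structure (\ref{eq:Gamma}) defining $\Gamma$. The earlier simplifying assumption $\M_i \cap \M_j = \emptyset$ for $i \neq j$ guarantees that each movement has a unique owning junction, so the component $G_{ab}$ is unambiguous throughout.
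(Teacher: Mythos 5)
Your proposal is correct and takes essentially the route the paper relies on: the paper states this corollary without a written proof, treating it as an immediate consequence of the capacity-region lemma together with the stationary randomized policies of \cite{NMR05} already invoked in that lemma's sufficiency argument, and your reduction---$\boldsymbol{\lambda} \in \Lambda$ iff there exists $\mathbf{G} \in \Gamma$ with $\lambda_a = \sum_b G_{ab} - \sum_c G_{ca}$, followed by the two-way correspondence between points of $\Gamma$ and conditional phase distributions $\beta^{(\mathbf{z})}_{\mathbf{p}}$ under the i.i.d.\ hypothesis---is exactly that standard argument, with both directions carried out correctly. The only detail worth stating explicitly is that the lemma's constraint $f_{ab} \geq 0$ is automatically met, since $\xi_i$ takes values in $\naturals$ and hence every convex combination $G_{ab}$ is nonnegative.
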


Finally, based on the above corollary and the basic property of our backpressure-based traffic signal control algorithm,
we can conclude that our algorithm leads to maximum network throughput.

\begin{theorem}
If there exists $\boldsymbol{\epsilon} > 0$ such that $\boldsymbol{\lambda} + \boldsymbol{\epsilon} \in \Lambda$,
then the proposed backpressure-based traffic signal controller stabilizes the network, provided that
$\mathbf{z}(t)$ is i.i.d. from slot to slot.
\end{theorem}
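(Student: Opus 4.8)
The plan is to verify the Lyapunov drift condition of Proposition~\ref{prop:LyapunovStability} for the quadratic function $L(\mathbf{Q}) = \sum_a Q_a^2$ and then invoke that proposition to conclude strong stability. The two ingredients that do the real work are Lemma~\ref{lem:basic_prop}, which guarantees that the backpressure phase vector $\mathbf{p}^*(t)$ maximizes $\sum_a Q_a(t)\big(V^{out}_a - V^{in}_a\big)$ pointwise in $\mathbf{z}(t)$ against every competing policy, and Corollary~\ref{cor:randomized}, which supplies a stationary randomized benchmark policy depending only on $\mathbf{z}(t)$.

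First I would record the one-slot queue dynamics. Under the standing assumption that each link always holds enough vehicles to realize the saturated movement rates, the inflow and outflow of link $\R_a$ during slot $t$ are exactly $V^{in}_a(\mathbf{p}^*(t),\mathbf{z}(t))$ and $V^{out}_a(\mathbf{p}^*(t),\mathbf{z}(t))$, so that
\[
Q_a(t+1) = Q_a(t) + A_a(t) + V^{in}_a(\mathbf{p}^*(t),\mathbf{z}(t)) - V^{out}_a(\mathbf{p}^*(t),\mathbf{z}(t)).
\]
Squaring, summing over $a$, and taking the conditional expectation given $\mathbf{Q}(t)$ separates the drift into a bounded second-moment term and a first-order term. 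The second-moment contribution is at most a finite constant $B$, since each $\xi_i$ is bounded (because $\P_i$ and $\Z_i$ are finite) and the arrivals obey the second-moment bound $A_{max}^2$ from Definition~\ref{def:rate}. The first-order term is
\[
2\sum_a Q_a(t)\,\expect\Big\{A_a(t) + V^{in}_a(\mathbf{p}^*(t),\mathbf{z}(t)) - V^{out}_a(\mathbf{p}^*(t),\mathbf{z}(t)) \,\Big|\, \mathbf{Q}(t)\Big\},
\]
which, after substituting $\expect\{A_a(t)\} = \lambda_a$, reduces the problem to lower-bounding $\sum_a Q_a(t)\,\expect\{V^{out}_a(\mathbf{p}^*) - V^{in}_a(\mathbf{p}^*) \mid \mathbf{Q}(t)\}$.

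This is where the benchmark enters. Since $\boldsymbol{\lambda} + \boldsymbol{\epsilon} \in \Lambda$, Corollary~\ref{cor:randomized} furnishes a randomized policy $\hat{\mathbf{p}}(t)$, a function of $\mathbf{z}(t)$ alone, with $\expect\{V^{out}_a(\hat{\mathbf{p}},\mathbf{z}) - V^{in}_a(\hat{\mathbf{p}},\mathbf{z})\} = \lambda_a + \epsilon_a$. Applying Lemma~\ref{lem:basic_prop} to each realization of $\mathbf{z}(t)$ with $\tilde{\mathbf{p}} = \hat{\mathbf{p}}$ gives the pointwise inequality $\sum_a Q_a(t)(V^{out}_a(\mathbf{p}^*) - V^{in}_a(\mathbf{p}^*)) \ge \sum_a Q_a(t)(V^{out}_a(\hat{\mathbf{p}}) - V^{in}_a(\hat{\mathbf{p}}))$. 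Because $\mathbf{z}(t)$ is i.i.d.\ and hence independent of $\mathbf{Q}(t)$, I can pull $Q_a(t)$ out of the expectation over $\mathbf{z}(t)$ and use the Corollary to obtain $\sum_a Q_a(t)\,\expect\{V^{out}_a(\mathbf{p}^*) - V^{in}_a(\mathbf{p}^*) \mid \mathbf{Q}(t)\} \ge \sum_a Q_a(t)(\lambda_a + \epsilon_a)$. Substituting back collapses the $\lambda_a$ terms and yields
\[
\expect\big\{L(\mathbf{Q}(t+1)) - L(\mathbf{Q}(t)) \,\big|\, \mathbf{Q}(t)\big\} \le B - 2\sum_a Q_a(t)\,\epsilon_a \le B - 2\epsilon_{\min}\sum_a Q_a(t),
\]
with $\epsilon_{\min} = \min_a \epsilon_a > 0$, which is precisely the hypothesis of Proposition~\ref{prop:LyapunovStability}.

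The step I expect to be the main obstacle is the honest handling of the arrival process, namely the substitution $\expect\{A_a(t) \mid \mathbf{Q}(t)\} = \lambda_a$. Admissibility in Definition~\ref{def:rate} is only a $T$-slot time-average statement, not a per-slot identity, so the clean single-slot drift above is fully rigorous only when the arrivals are genuinely i.i.d.\ with mean $\lambda_a$ and independent of the queue state; in the general rate-ergodic case one must instead carry out a $T$-slot Lyapunov drift, averaging the dynamics over a window of length $T$ chosen via Definition~\ref{def:rate} so that the window-averaged arrival rate lies within $\epsilon/2$ of $\lambda_a$, and then absorb the resulting cross terms into an enlarged constant $B$. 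The other point requiring care is the independence of $\mathbf{z}(t)$ from $\mathbf{Q}(t)$ that licenses pulling $Q_a(t)$ out of the expectation against the randomized benchmark; this is exactly why the i.i.d.\ hypothesis on $\mathbf{z}(t)$ is needed and why Corollary~\ref{cor:randomized} is stated under that same hypothesis.
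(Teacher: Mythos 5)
Your proposal is correct and follows essentially the same route as the paper's own proof: a one-slot quadratic Lyapunov drift bound, Lemma~\ref{lem:basic_prop} to show the backpressure phases dominate any alternative (in particular the stationary randomized benchmark), Corollary~\ref{cor:randomized} to equate the benchmark's expected service differential to $\lambda_a + \epsilon$, and Proposition~\ref{prop:LyapunovStability} to conclude stability. If anything, you are slightly more careful than the paper, which silently substitutes $\expect\{A_a(t)\mid\mathbf{Q}(t)\} = \lambda_a$ where admissibility only guarantees a time-averaged rate (your $T$-slot drift remark is the standard fix, as in the cited work of Neely et al.) and which uses a scalar $\epsilon$ where your $\epsilon_{\min}$ over the components of $\boldsymbol{\epsilon}$ is the precise statement.
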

\begin{proof}
Consider an arbitrary policy $\tilde{\mathbf{p}}(t)$.
By simple manipulations, we get
\begin{equation*}
\hspace{-2mm}
\begin{array}{l}
L(\mathbf{Q}(t+1)) - L(\mathbf{Q}(t)) \leq B -  \\ 
\hspace{3mm}2
\displaystyle{\sum_{a} Q_a(t) \Big( V^{out}_a \big( \tilde{\mathbf{p}}(t), \mathbf{z}(t) \big) - A_a(t) - V^{in}_a \big( \tilde{\mathbf{p}}(t), \mathbf{z}(t) \big) \Big)},
\end{array}
\end{equation*}
where $A_a(t)$ is the number of vehicle that exogenously enter the network at link $\R_a$ during time slot $t$,
\begin{equation*}
\begin{array}{rcl}
B &=&
\displaystyle{\sum_a \Bigg( \Big( \sup_{\scriptsize \begin{array}{c}\mathbf{p} \in \P,\\ \mathbf{z} \in \Z\end{array}}  
V^{out}_a \big( \mathbf{p}(t), \mathbf{z}(t) \big) \Big)^2} +\\
&&\displaystyle{\Big( A_a^{max} + 
\sup_{\scriptsize \begin{array}{c}\mathbf{p} \in \P,\\ \mathbf{z} \in \Z\end{array}}  
V^{in}_a \big( \mathbf{p}(t), \mathbf{z}(t) \big) \Big)^2 \Bigg)}
\end{array}
\end{equation*}
and $A_a^{max}$ satisfies $A_a(t) \leq A_a^{max}, \forall t$. Hence, we get
\begin{equation*}
\begin{array}{l}
\expect \Big\{ L(\mathbf{Q}(t+1)) - L(\mathbf{Q}(t)) \Big| \mathbf{Q}(t) \Big\} \leq\\
\hspace{8mm} \displaystyle{B + 2\sum_a Q_a(t) \expect \Big\{ A_a(t) \Big| \mathbf{Q}(t) \Big\} - 2 \sum_{a} Q_a(t)}\\
\hspace{8mm} \expect\Big\{ V^{out}_a \big( \tilde{\mathbf{p}}(t), \mathbf{z}(t) \big) - V^{in}_a \big( \tilde{\mathbf{p}}(t), \mathbf{z}(t) \big) \Big| \mathbf{Q}(t)\Big\}
\end{array}
\end{equation*}

However, from Lemma \ref{lem:basic_prop}, the proposed backpressure-based traffic signal controller minimizes
the final term on the right hand side of the above inequality over all possible alternative policies $\tilde{\mathbf{p}}(t)$.
But since $\boldsymbol{\lambda} + \boldsymbol{\epsilon} \in \Lambda$, according to Corollary \ref{cor:randomized},
there exists a stationary randomized algorithm that makes phase decisions based only on
the current traffic state $\mathbf{z}(t)$ and that yields for all $a \in \{1, \ldots, N\}$, $t \in \naturals$,
\begin{equation*}
\begin{array}{c}
\expect \Big\{ V^{out}_a \big( \tilde{\mathbf{p}}(t), \mathbf{z}(t) \big) - V^{in}_a \big( \tilde{\mathbf{p}}(t), \mathbf{z}(t) \big) 
 \Big| \mathbf{Q}(t) \Big\}
= \lambda_a + \epsilon.
\end{array}
\end{equation*}
Hence, we get that when the proposed backpressure-based traffic signal controller is used,
\begin{equation*}
\expect \Big\{ L(\mathbf{Q}(t+1)) - L(\mathbf{Q}(t)) \Big | \mathbf{Q}(t) \Big\} \leq B - 2\epsilon\sum_a Q_a(t),
\end{equation*}
and from Proposition \ref{prop:LyapunovStability}, we can conclude that the network is stable.
\end{proof}

%%%%%%%%%%%%%%%%%%%%%%%%%%%%%%%%%%%%%%%%%%%%%%%%%%%%%%%%%%%%%%%
\section{Simulation Results}
\label{sec:results}

First, we consider a 4-phase junction with 4 approaches and 8 links as shown in Figure \ref{fig:sim-setup}.
Vehicles exogenously entering each of the 8 links are simulated based on the data	Êcollected	from	the loop detectors installed	
at the junction between Clementi Rd and Commonwealth Ave W, Singapore.
The maximum output rate of each lane is assumed to be 4 times of the maximum arrival rate of that lane.

We implemented SCATS, which is the system currently implemented in Singapore, 
and our algorithm in MATLAB.
The parameters used in the SCATS algorithm are obtained from \cite{Liu03Master}.
Based on \cite{Pecherkova08Application,Pecherkova08Modelling}, the queue length on each link $\R_a$ evolves as follows.
\begin{equation}
Q_a(t+1) = Q_a(t) + I_a(t) - I_a^{\pi}(Q_a(t), I_a(t), R_a(t)),
\end{equation}
where $I_a(t)$ is the number of vehicles arriving at link $\R_a$ during time slot $t$ and
$I_a^{\pi}$ is a function that describes the number of passing vehicles and is given by
\begin{equation}
\label{eq:sim_rate}
I_a^{\pi}(Q_a(t), I_a(t), R_a(t)) = R_a(t)\left(1 - e^{\frac{-(Q_a(t) + I_a(t))}{R_a(t)}} \right).
\end{equation}
Here, $R_a(t) = S_a(t)g_a(t)$ is the maximum number of passing vehicles where $S(t)$ is the saturation flow and 
$g(t)$ is the green time for link $\R_a$.
\begin{figure}[h] 
   \centering 
        \includegraphics[trim=3cm 12cm 18cm 3cm, clip=true, width=0.11\textwidth]{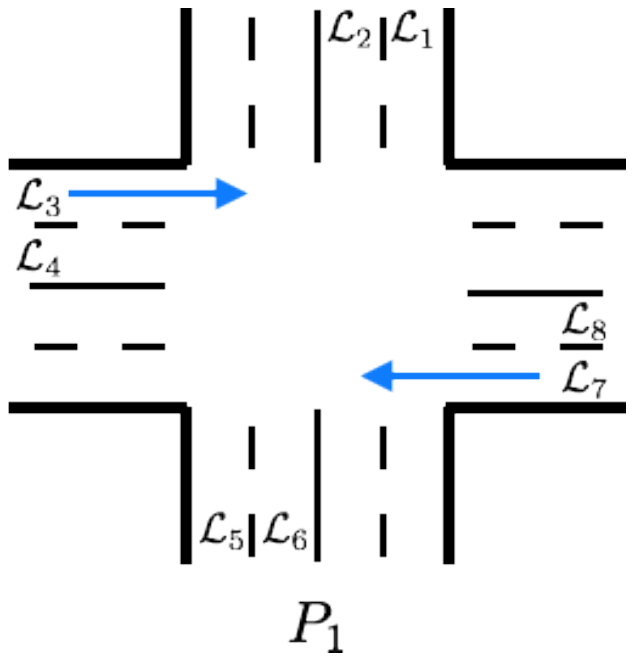}
        \hfill
        \includegraphics[trim=3cm 12cm 18cm 3cm, clip=true, width=0.11\textwidth]{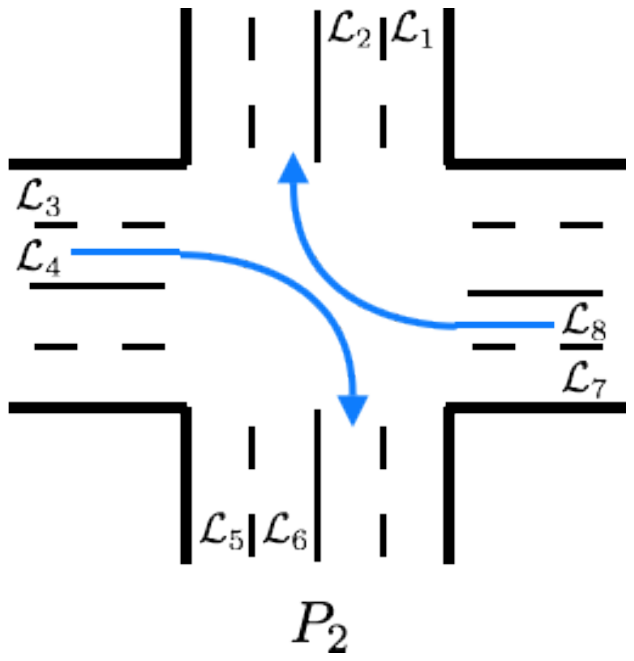}
        \hfill
        \includegraphics[trim=3cm 12cm 18cm 3cm, clip=true, width= 0.11\textwidth]{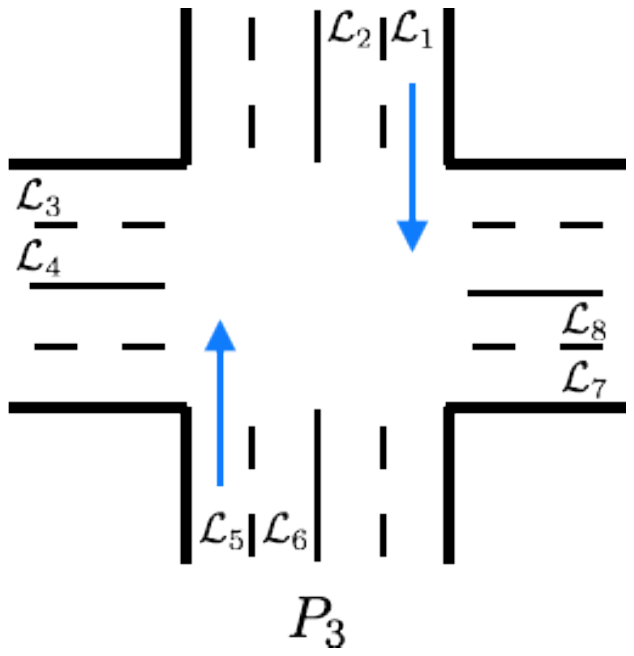}
        \hfill
        \includegraphics[trim=3cm 12cm 18cm 3cm, clip=true, width= 0.11\textwidth]{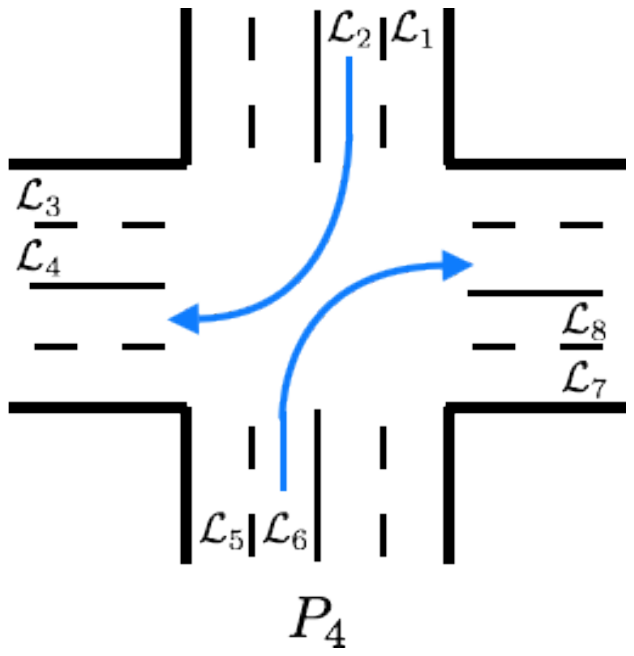}
   \caption{A 4-phase junction with 4 approaches 8 links used in our simulation.}
  \label{fig:sim-setup}
\end{figure}

Assuming that all the links have infinite queue capacity, 
queue lengths of each lane when our algorithm and SCATS are applied are shown in Figure \ref{fig:sim-results1}.
These simulation results show that our algorithm can reduce the maximum queue length by an order of magnitude, compared to SCATS,
as shown in Figure \ref{fig:sim-results-max1}.
Figure \ref{fig:sim-results-ave1} shows that our algorithm also performs significantly better on average.

\begin{figure}[h] 
   \centering 
        \includegraphics[trim=3cm 0.5cm 1.5cm 0.5cm, clip=true, width=0.45\textwidth]{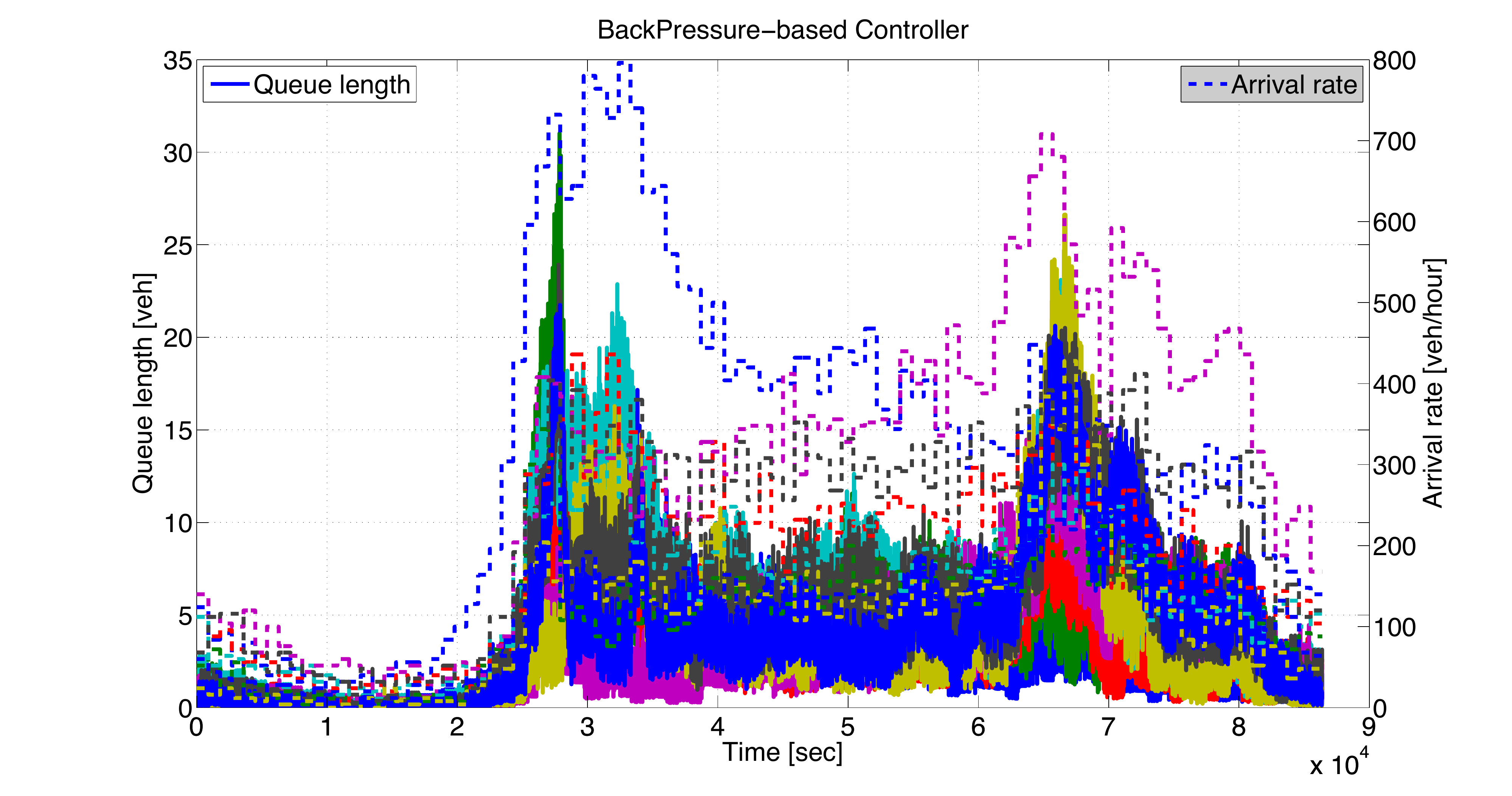}
        \includegraphics[trim= 3cm 0.5cm 1.5cm 0.5cm, clip=true, width=0.45\textwidth]{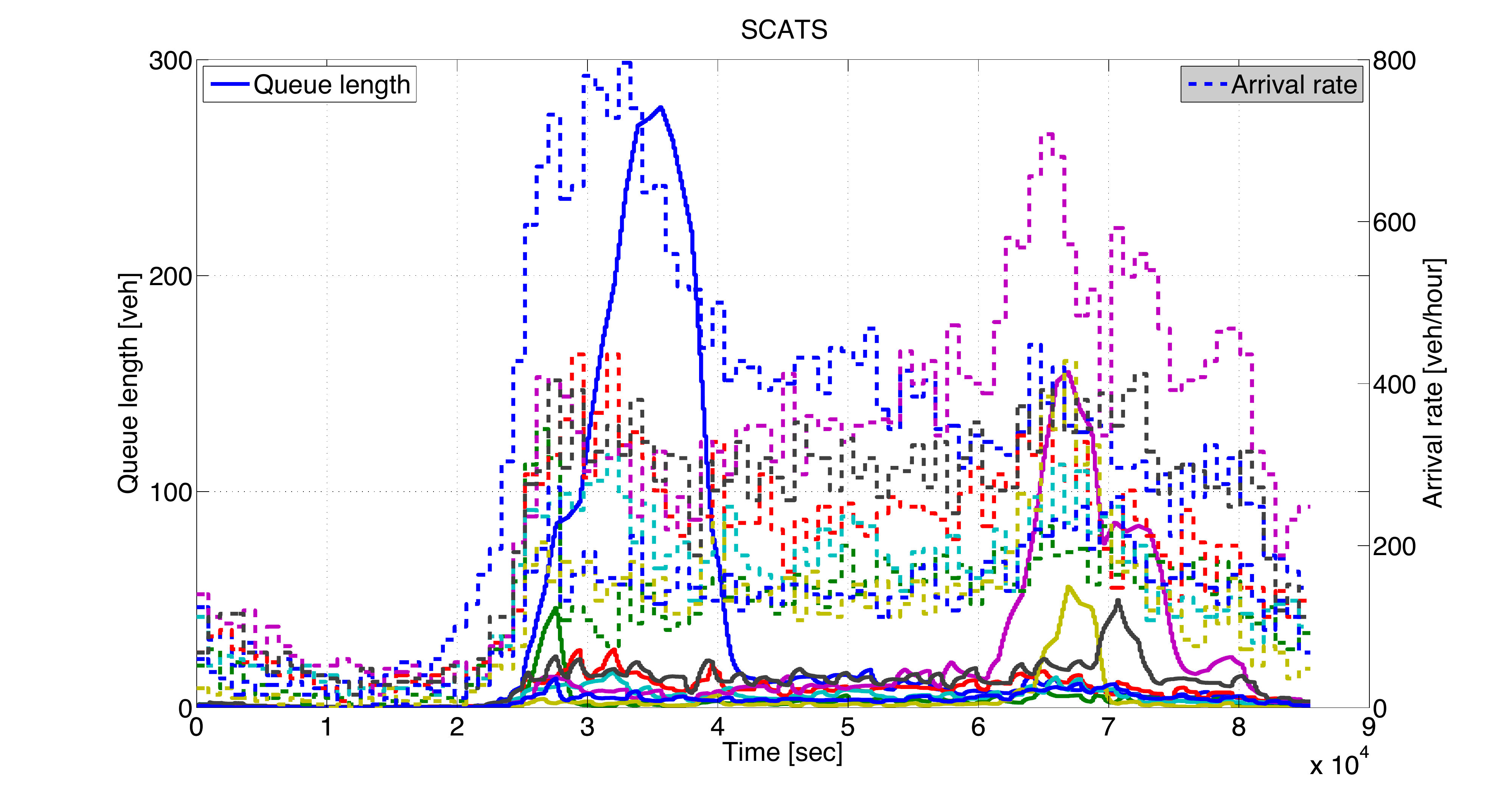}
   \caption{Simulation results showing the arrival rate (dashed line) and the resulting queue length (solid line) of each lane when 
   (top) backpressure-based controller and (bottom) SCATS are applied. Different colors correspond to different lanes.}
  \label{fig:sim-results1}
\end{figure}

\begin{figure}[h] 
   \centering 
        \includegraphics[trim=3cm 0.5cm 1.5cm 0.5cm, clip=true, width=0.45\textwidth]{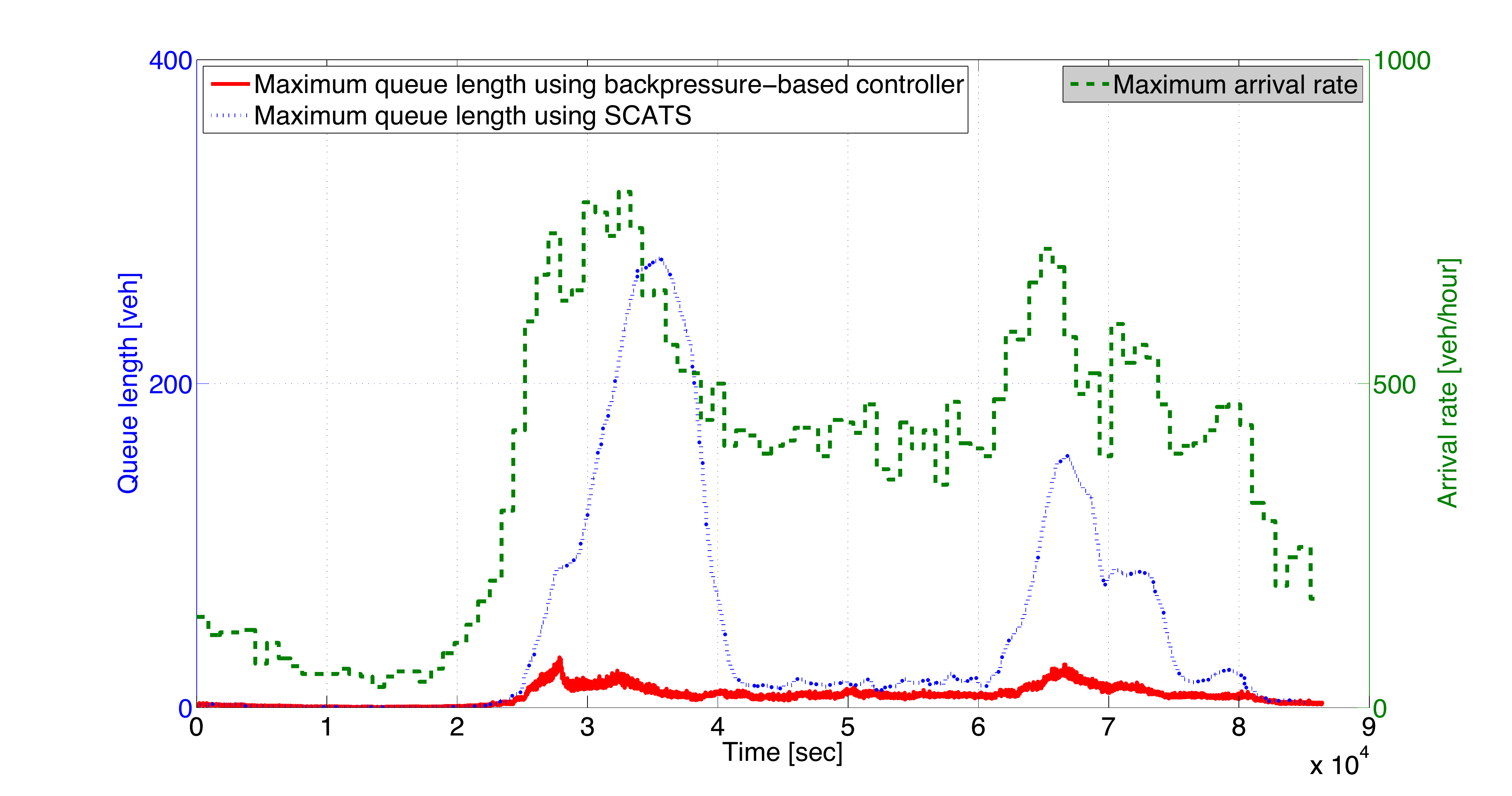}
   \caption{The maximum arrival rate and the maximum queue length over all the lanes when the backpressure-based controller and SCATS are applied.}
  \label{fig:sim-results-max1}
\end{figure}

\begin{figure}[h] 
   \centering 
        \includegraphics[trim=3cm 0.5cm 1.5cm 0.5cm, clip=true, width=0.45\textwidth]{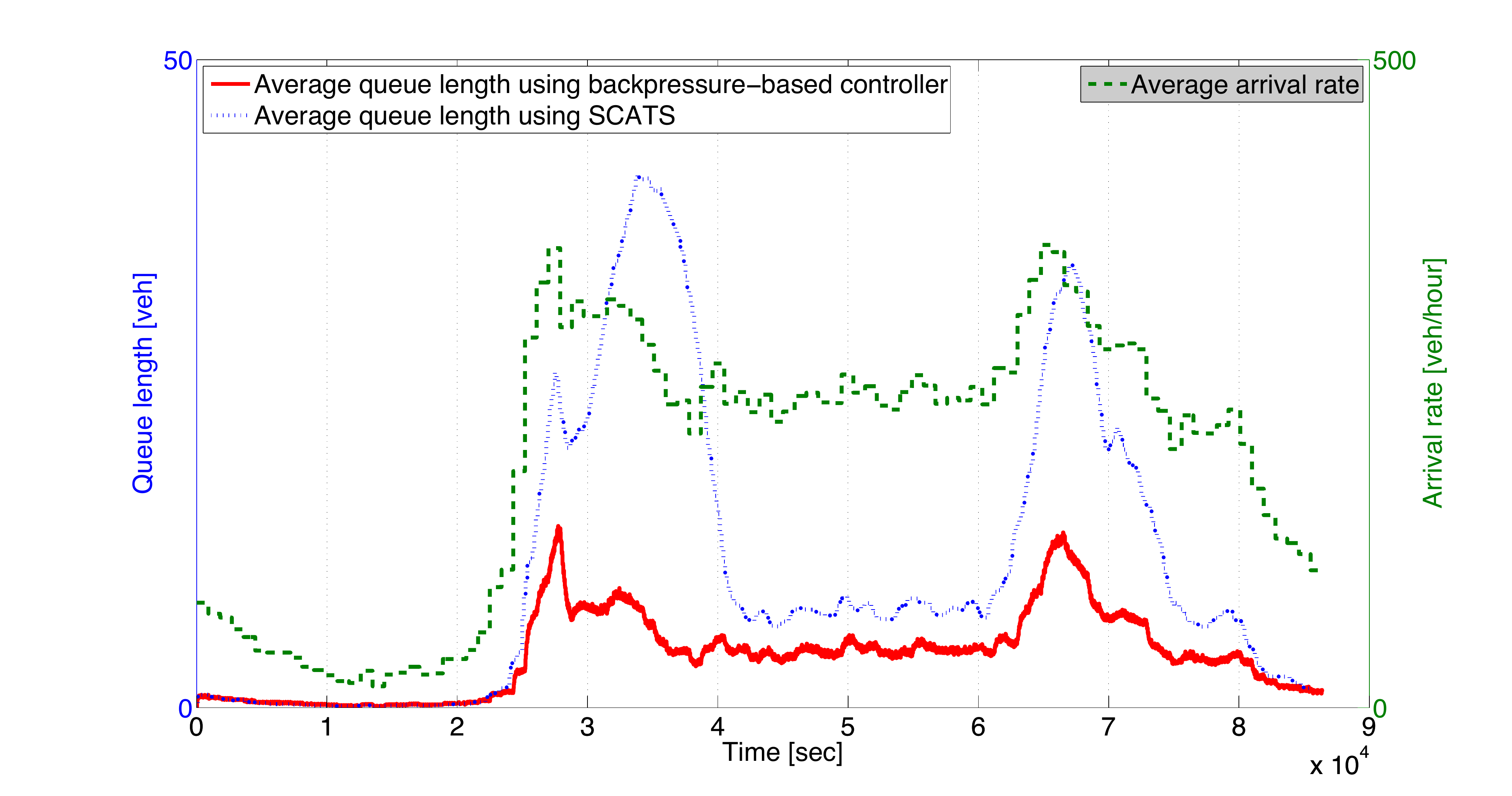}
   \caption{The average arrival rate and the average queue length over all the lanes when the backpressure-based controller and SCATS are applied.}
  \label{fig:sim-results-ave1}
\end{figure}

Suppose each link can actually accommodate only 100 vehicles. Figure \ref{fig:sim-results2} shows that 
SCATS can only support up to 0.9 times of the current vehicle arrival rate whereas
the the backpressure-based controller can support up to 1.3 times of the current vehicle arrival rate before the queue length exceeds the link capacity.

\begin{figure}[h] 
   \centering 
        \includegraphics[trim=3cm 0.5cm 1.5cm 0.5cm, clip=true, width=0.45\textwidth]{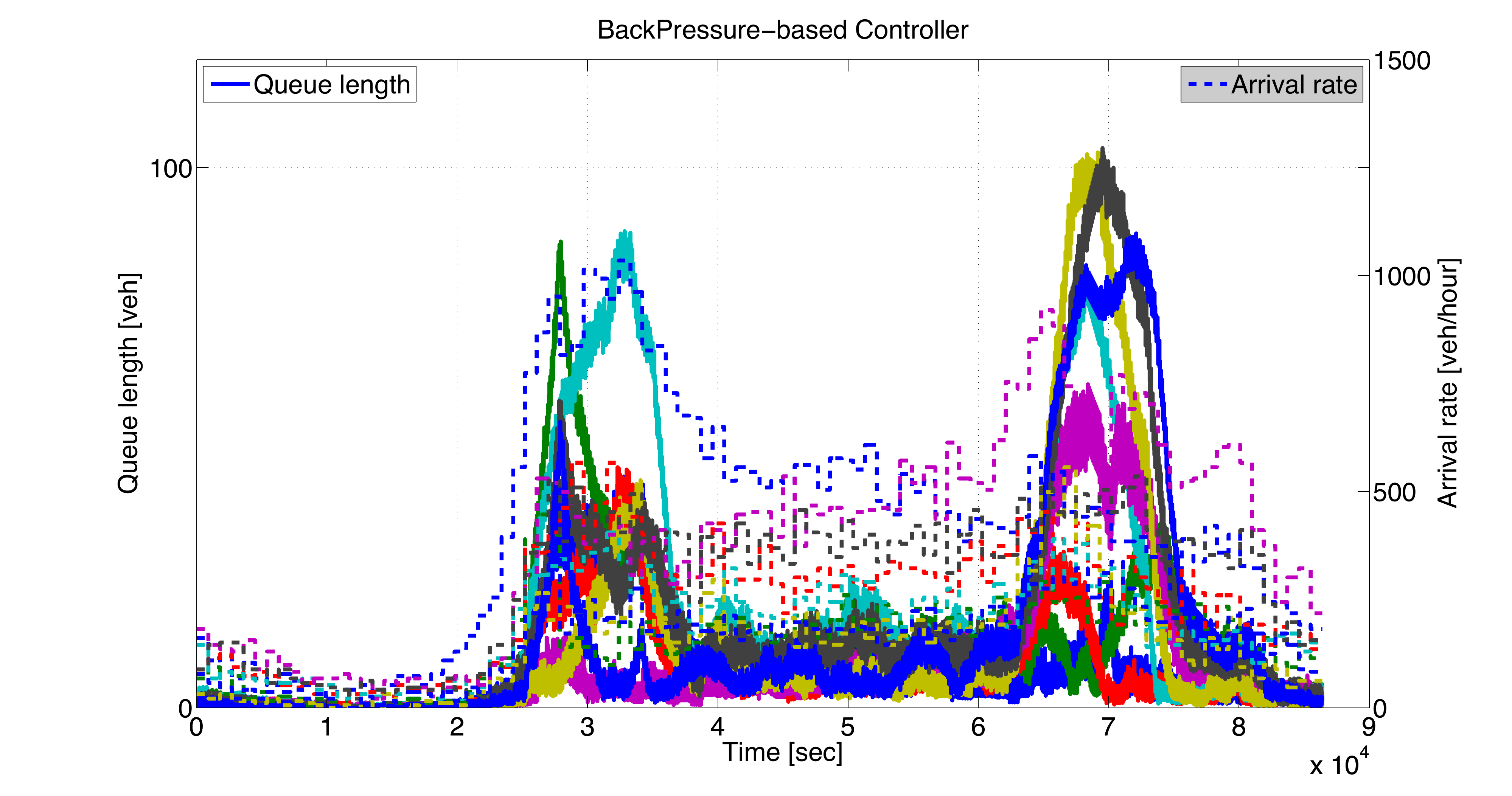}
        \includegraphics[trim= 3cm 0.5cm 1.5cm 0.5cm, clip=true, width=0.45\textwidth]{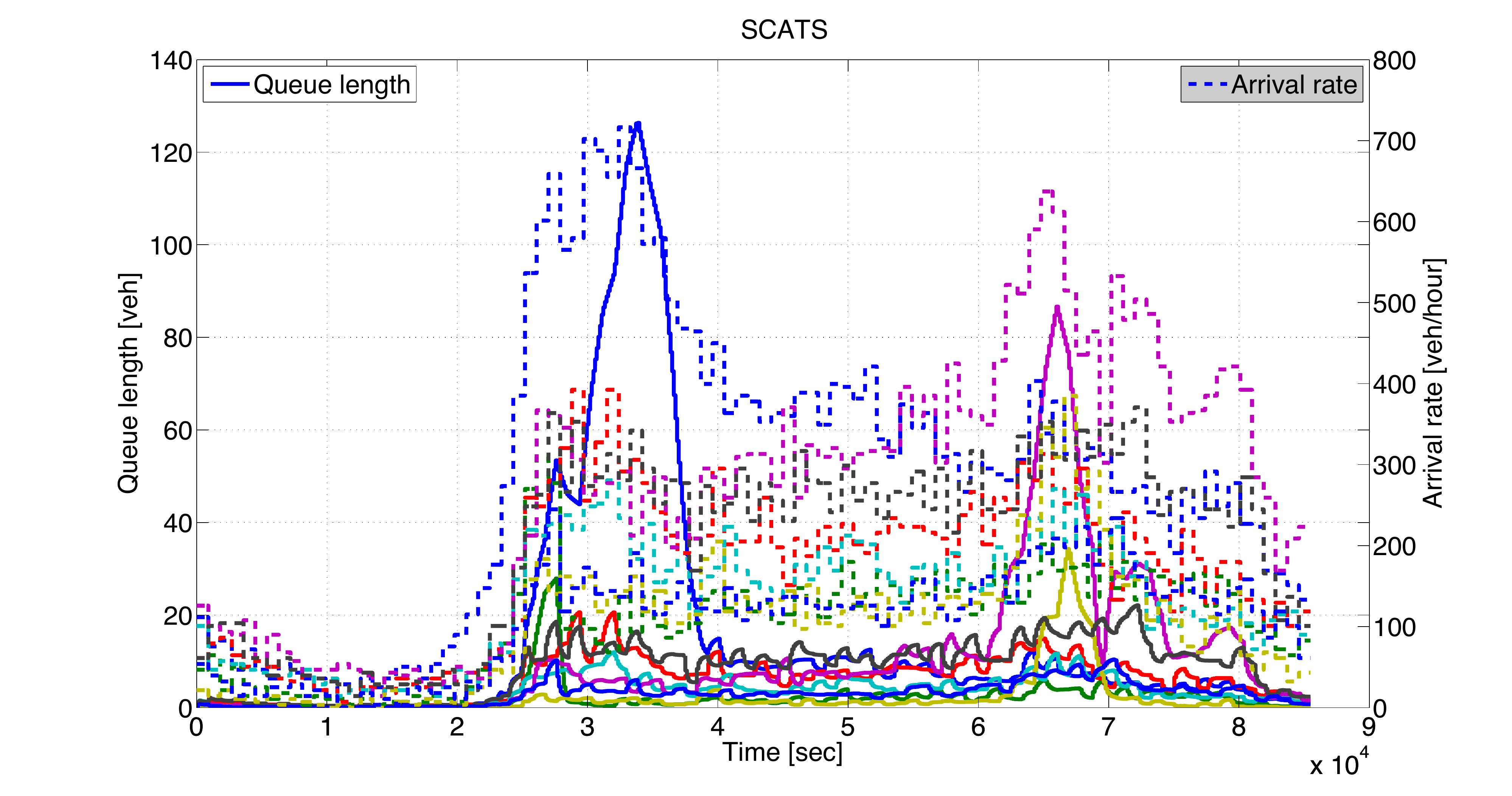}
   \caption{Simulation results showing the queue length (solid line) when 
   (top) the backpressure-based controller is applied with the vehicle arrival rate (dashed line) that is 1.3 times of the current value and 
   (bottom) SCATS is applied with the vehicle arrival rate (dashed line) that is 0.9 times of the current value. 
   Different colors correspond to different lanes.}
  \label{fig:sim-results2}
\end{figure}

Next, we employ a microscopic traffic simulator MITSIMLab \cite{Ben-Akiva01}, 
whose simulation models have been validated against traffic data collected from Swedish cities,
to evaluate our backpressure-based traffic signal control algorithm.
We consider a road network with 112 links and 14 signalized junctions as shown in Figure \ref{fig:sim-MITSIM-network}.
Vehicles exogenously enter and exit the network at various links based on 46 different origin-destination pairs,
with the arrival rate of 9330 vehicles/hour.
We implement SCATS and our backpressure-based traffic signal control algorithm in the traffic management simulator component of MITSIMLab.
Queue length (i.e., the number of vehicles) %as well as average speed 
on each link when each algorithm is used is continuously recorded.
Note that in this case, the rate function $\xi_i$, which is used in our algorithm, is still derived from the macroscopic model in 
(\ref{eq:sim_rate}).
Hence, it may not accurately give the flow rate through the corresponding junction
due to a possible mismatch between the macroscopic model in (\ref{eq:sim_rate}) 
and the microscopic model used in MITSIMLab.
%Hence, it may not give the exact number of vehicles that can go through the corresponding junction in each time slot as in the previous 1-junction case
%because the macroscopic model may not accurately represents the microscopic model used in MITSIMLab.
In addition, as opposed to the previous 1-junction case, all the links have finite queue capacity in this case.

\begin{figure}[h] 
   \centering 
        \includegraphics[trim=5.5cm 2.7cm 1.5cm 1.6cm, clip=true, width=0.48\textwidth]{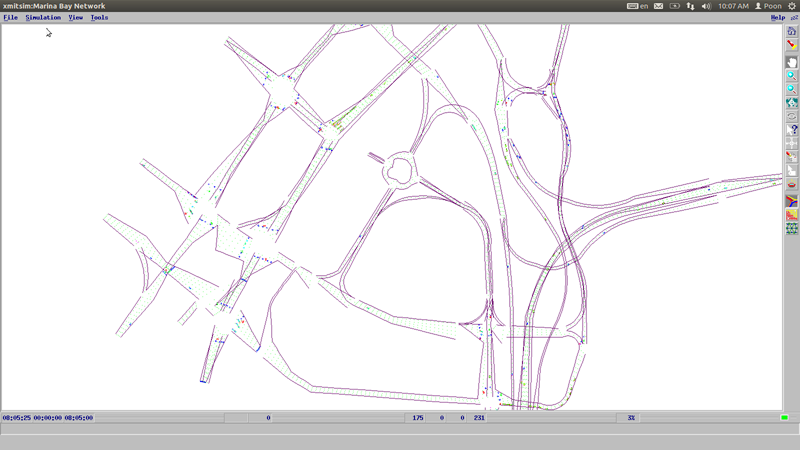}
   \caption{Road network used in the MITSIMLab simulation.}
  \label{fig:sim-MITSIM-network}
\end{figure}

The maximum and average queue lengths are shown in Figure \ref{fig:sim-results-MITSIM-max} and Figure \ref{fig:sim-results-MITSIM-avg}, respectively.
These simulation results show that our algorithm can reduce the maximum queue length by a factor of 3, compared to SCATS.
In addition, it performs significantly better on average.
%A long queue that causes spillback over several links in the network is observed when SCATS is used.
One of the reasons that the difference in the queue lengths when our algorithm and SCATS are applied is not as significant as in
the previous 1-junction case is because in this case, each link has a finite capacity.
Hence, the number of vehicles on each link is limited by the link capacity and therefore queue length on each link cannot grow very large.
In fact, as shown in Figure \ref{fig:sim-MITSIM-spillback}, 
queue spillback, where queues extend beyond one link upstream from the junction, persists throughout the simulation,
especially when SCATS is used.
%At this high arrival rate, queue spillback, which persists throughout the simulation, is observed when both SCATS and our algorithm are used.
%However, the situation is much worse when SCATS is used as shown in Figure \ref{fig:sim-MITSIM-spillback}.

\begin{figure}[h] 
   \centering 
        \includegraphics[trim=3cm 0.5cm 3cm 0.5cm, clip=true, width=0.48\textwidth]{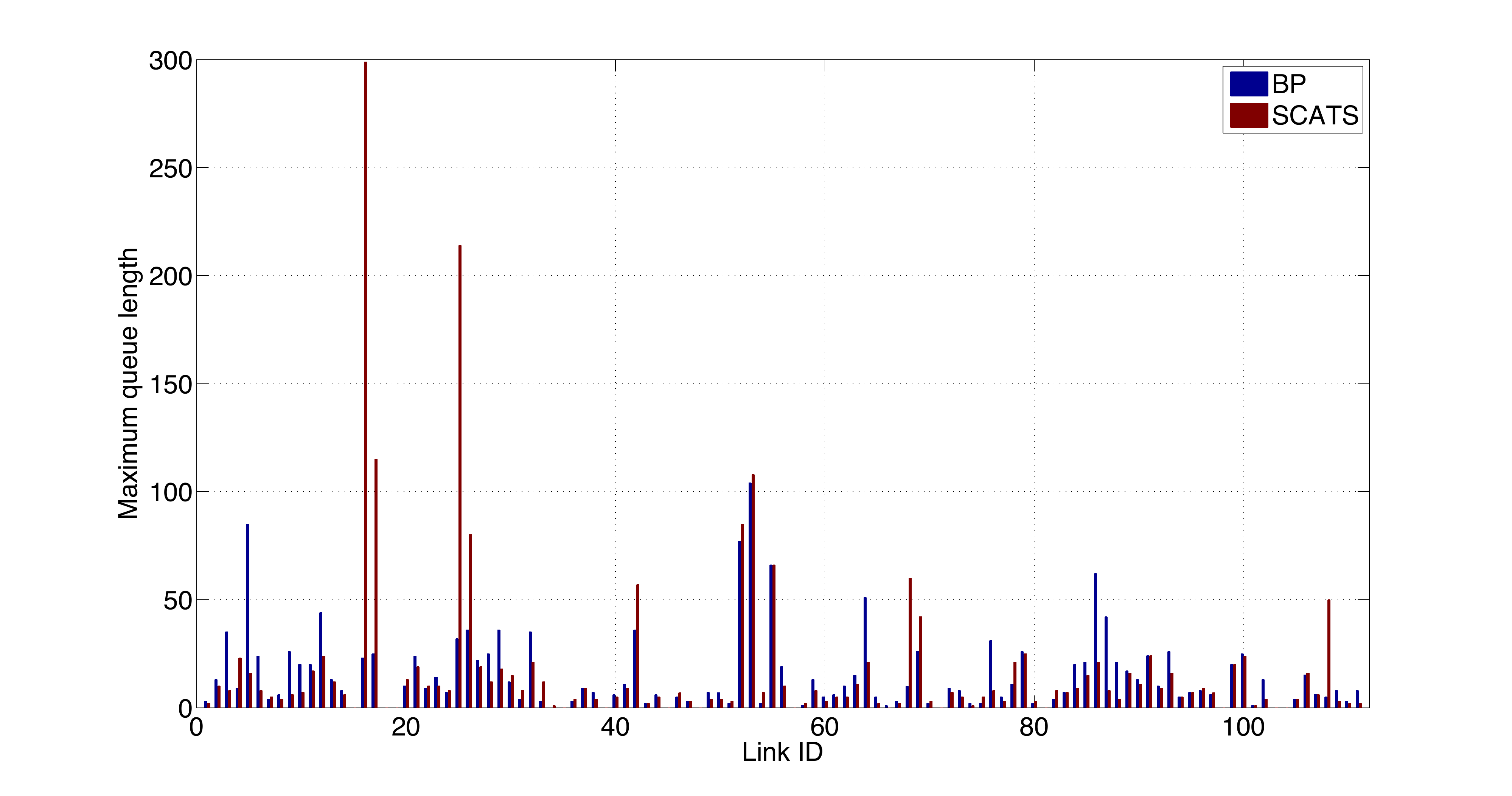}
        \includegraphics[trim= 3cm 0.5cm 3cm 0.5cm, clip=true, width=0.48\textwidth]{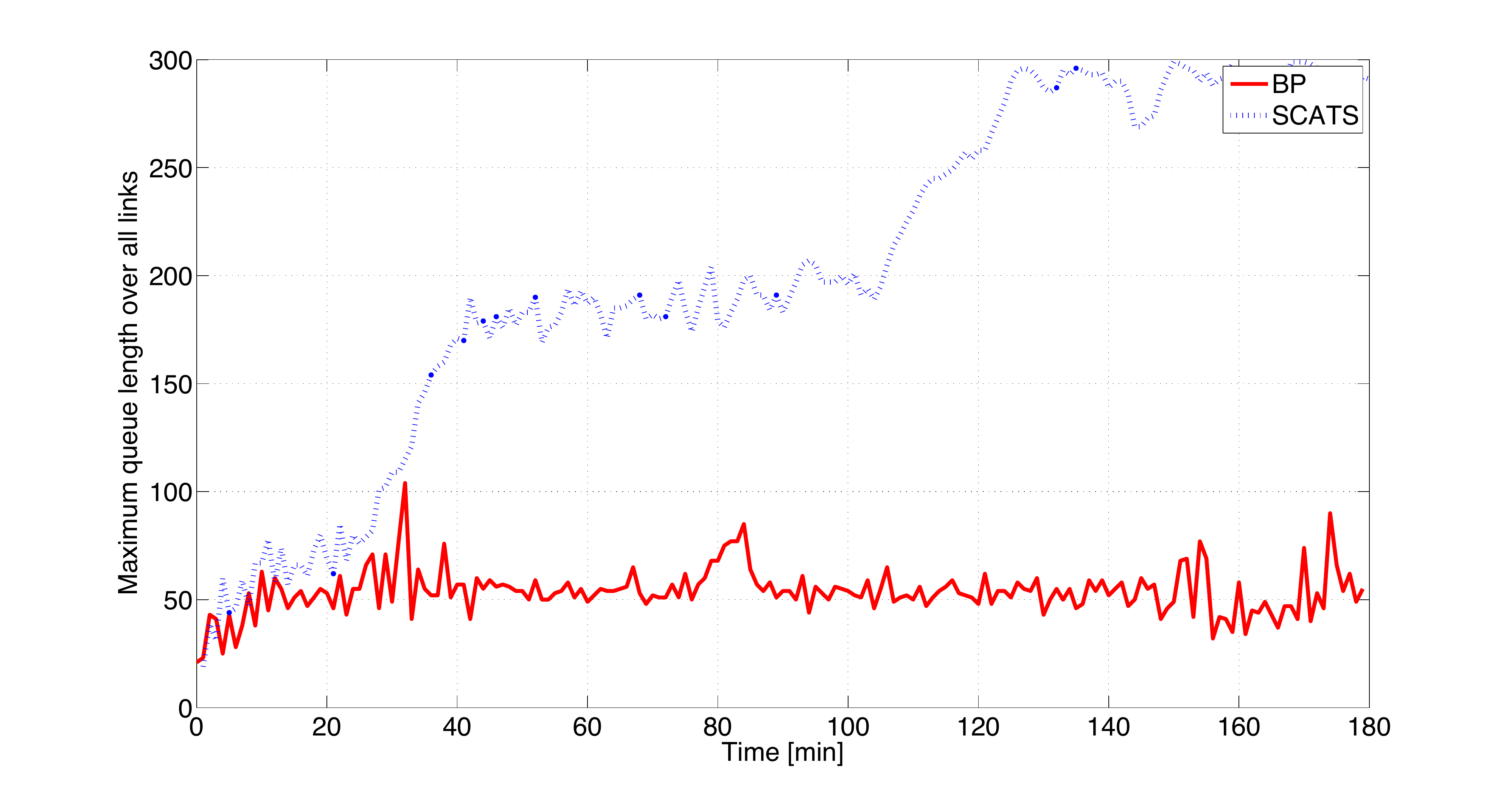}
   \caption{Simulation results showing maximum queue lengths when SCATS and our backpressure-based traffic signal control algorithm (BP) are used.}
  \label{fig:sim-results-MITSIM-max}
\end{figure}

\begin{figure}[h] 
   \centering 
        \includegraphics[trim=3cm 0.5cm 3cm 0.5cm, clip=true, width=0.48\textwidth]{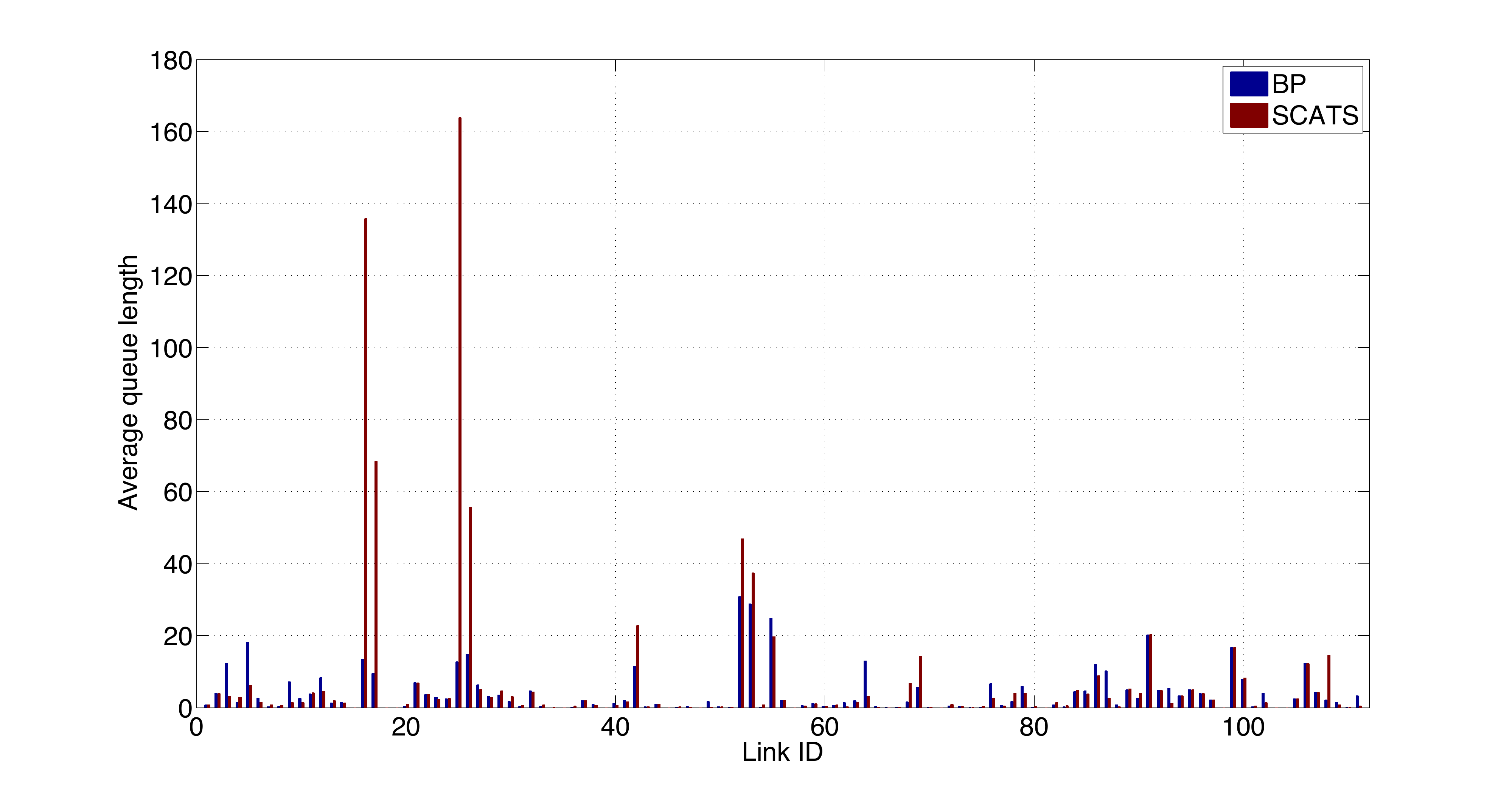}
        \includegraphics[trim= 3cm 0.5cm 3cm 0.5cm, clip=true, width=0.48\textwidth]{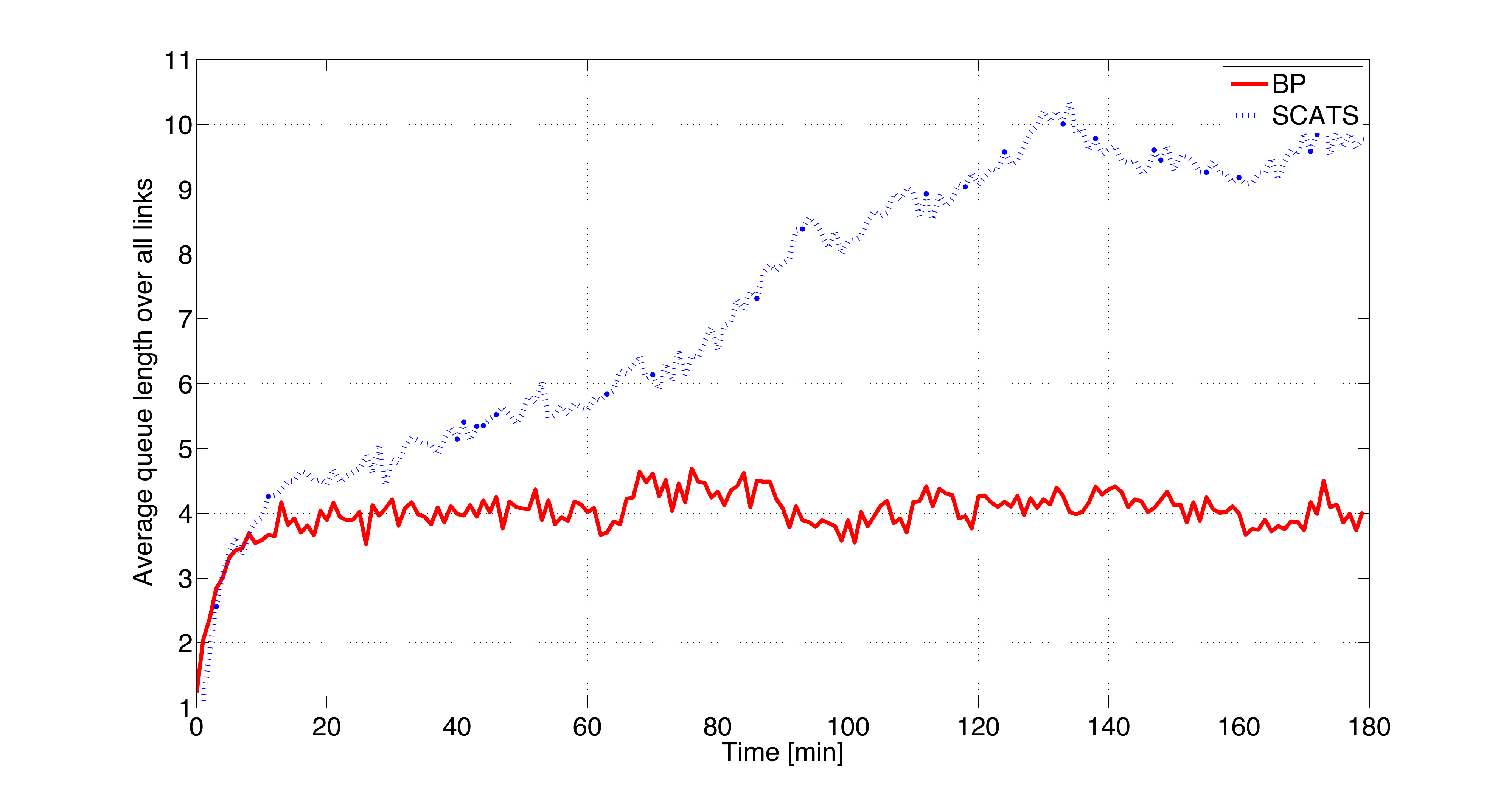}
   \caption{Simulation results showing average queue lengths when SCATS and our backpressure-based traffic signal control algorithm (BP) are used.}
  \label{fig:sim-results-MITSIM-avg}
\end{figure}

\begin{figure}[h] 
   \centering 
        \includegraphics[trim=16.5cm 2.7cm 20.5cm 1.6cm, clip=true, width=0.2\textwidth]{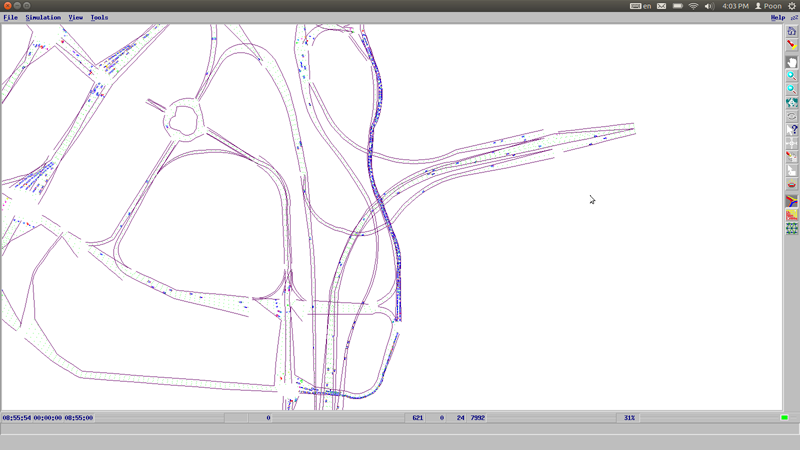}
        \hfill
        \includegraphics[trim=16.5cm 2.7cm 20.5cm 1.6cm, clip=true, width=0.2\textwidth]{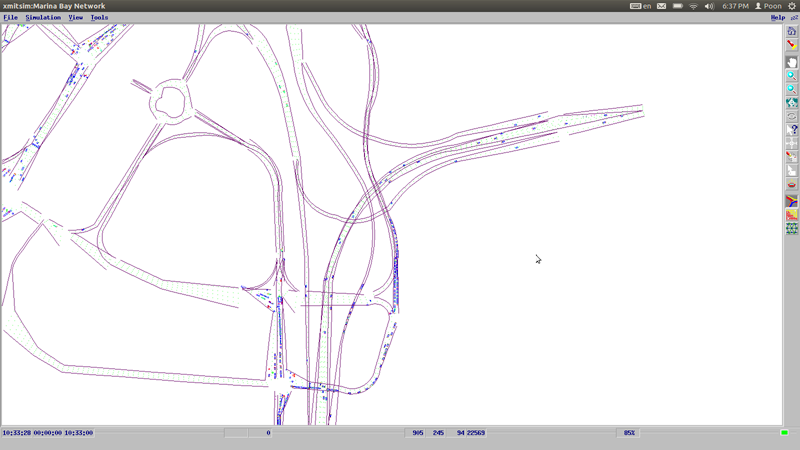}
   \caption{%(left) Queue spillback which occurs when SCATS is used, and 
   (left) Queues spread over multiple links upstream when SCATS is used, and
   (right) Queues do not spread over as many links when our backpressure-based traffic signal control algorithm is used.
   The part of the road that is filled with blue is occupied by vehicles.}
  \label{fig:sim-MITSIM-spillback}
  \vspace{-3mm}
\end{figure}

%%%%%%%%%%%%%%%%%%%%%%%%%%%%%%%%%%%%
\section{Conclusions and Future Work}
\label{sec:conclusions}
We considered distributed control of traffic signals.
Motivated by backpressure routing, which has been mainly applied to communication and power networks,
our approach relies on constructing a set of local controllers, each of which is associated with each junction.
These local controllers are constructed and implemented independently of one another.
Furthermore, each local controller does not require the global view of the road network.
Instead, it only requires information that is local to the junction with which it is associated.
We formally proved that our algorithm leads to maximum network throughput even though
the controller is constructed and implemented in such a distributed manner
and no information about traffic arrival rates is provided.
Simulation results showed that our algorithm performs significantly better than SCATS, 
an adaptive traffic signal control systems that is being used in many cities.

Future work includes incorporating fairness constraints such as
ensuring that each traffic flow is served within a certain service interval.
Another issue that needs to be addressed as our algorithm may not lead
to periodic switching sequences of phases is the additional delay in drivers' responses to traffic signals,
unless a prediction of the next phase can be provided.
We are also investigating the coordination issue such as ensuring the emergence of green waves.

%%%%%%%%%%%%%%%%%%%%%%%%%%%%%%%%%%%%%%%%%%%%%%%%%%%%%%%%%%%%%%%%%%%%%%%%%%%%%%%%
\section{ACKNOWLEDGMENTS}
The authors gratefully acknowledge Ketan Savla for the inspiring discussions, Prof. Moshe Ben-Akiva and his research group, in particular
Kakali Basak and Linbo Luo, for support with MITSIMLab,
and Land Transport Authority of Singapore for providing the data collected from the loop detectors
installed at the junction between Clementi Rd and Commonwealth Ave W.
This work is supported in whole or in part by the Singapore National Research Foundation (NRF) through the Singapore-MIT Alliance for Research and Technology (SMART) Center for Future Urban Mobility (FM).

\bibliographystyle{IEEEtran}
\bibliography{ref}
\end{document}